\documentclass[sigconf]{acmart}
\AtBeginDocument{%
  }
\renewcommand\footnotetextcopyrightpermission[1]{} 
\usepackage{makecell}
\usepackage[ruled,vlined]{algorithm2e}
\usepackage{algpseudocode}
\usepackage{listings}
\usepackage{multirow}
\usepackage{subfigure}
\usepackage{tabularx}
\usepackage[table]{xcolor}
\usepackage{fontawesome}
\usepackage{tcolorbox}
\tcbuselibrary{skins}
\usepackage{siunitx}
\usepackage{enumitem}
\setlist[itemize]{leftmargin=*}
\usepackage{pgfplots}
\usepackage{textcomp}
\pgfplotsset{compat=1.18}
\usepgfplotslibrary{groupplots,statistics}
\usetikzlibrary{patterns,positioning,arrows.meta,fit}
\definecolor{cPal}{HTML}{2A78D6}  
\definecolor{cAlt}{HTML}{EB6834}  
\definecolor{cThr}{HTML}{1BAF7A}  
\definecolor{cGrid}{HTML}{DDDDDD}
\definecolor{tabHead}{HTML}{EEF2F7}
\colorlet{tabOurs}{cPal!12}
\colorlet{tabBest}{cThr!22}
\colorlet{tabWeak}{cAlt!14}
\newcommand{\tabstyle}{\footnotesize\renewcommand{\arraystretch}{1.18}\setlength{\aboverulesep}{0pt}\setlength{\belowrulesep}{0pt}}
\newcommand{\best}[1]{\cellcolor{tabBest}\textbf{#1}}
\newcommand{\weak}[1]{\cellcolor{tabWeak}#1}
\newcommand{\tabtop}{\arrayrulecolor{black!80}\specialrule{0.8pt}{0pt}{0pt}\arrayrulecolor{black!30}}
\newcommand{\tabmid}{\arrayrulecolor{black!35}\specialrule{0.4pt}{0pt}{0pt}}
\newcommand{\tabbot}{\arrayrulecolor{black!80}\specialrule{0.8pt}{0pt}{0pt}\arrayrulecolor{black}}
\pgfplotsset{
  paperstyle/.style={
    font=\footnotesize,
    tick label style={font=\footnotesize, text=black!70},
    label style={font=\footnotesize, text=black!85},
    title style={font=\footnotesize\bfseries, text=black!85, yshift=-3pt},
    axis lines*=left,
    axis line style={black!40, line width=0.45pt},
    tick style={black!40, line width=0.45pt},
    major tick length=2pt,
    tick align=outside,
    ymajorgrids,
    major grid style={cGrid!80, line width=0.35pt},
    every axis plot/.append style={line join=round, line cap=round},
    mark options={solid, draw=white, line width=0.4pt},
    legend style={font=\footnotesize, draw=none, fill=none, legend columns=-1, column sep=7pt,
                  /tikz/every even column/.append style={column sep=7pt}},
    legend cell align=left,
  },
  barlegend/.style={legend image code/.code={\fill[##1] (0pt,-2.6pt) rectangle (5.2pt,2.6pt);}},
  linelegend/.style={legend image code/.code={\draw[##1, mark repeat=2, mark phase=2] plot coordinates {(0pt,0pt) (7pt,0pt) (14pt,0pt)};}},
  outbar/.style={draw=white, line width=0.3pt},
}

\newcommand{\Rmnum}[1]{\uppercase\expandafter{\romannumeral #1}}  
\newcommand{\rmnum}[1]{\romannumeral #1} 

\newtcolorbox{takeawaybox}{enhanced, colback=cPal!6, colframe=cPal!80!black, boxrule=0pt, leftrule=2.2pt, arc=0pt, outer arc=0pt, left=4pt, right=4pt, top=2.5pt, bottom=2.5pt, before skip=4pt, after skip=3pt, fontupper=\small}
\newcommand{\takeaway}[1]{\begin{takeawaybox}{\color{cPal!80!black}\faLightbulbO\ \textbf{Takeaway:}} #1\end{takeawaybox}}

\setcopyright{acmlicensed}
\copyrightyear{2018}
\acmYear{2018}
\acmDOI{XXXXXXX.XXXXXXX}
\acmConference[Conference acronym 'XX]{Make sure to enter the correct
  conference title from your rights confirmation email}{June 03--05,
  2018}{Woodstock, NY}
\acmISBN{978-1-4503-XXXX-X/2018/06}

\author{Hao Fu}
\email{haofu@vt.edu}
\affiliation{%
  \institution{Virginia Tech}
  \country{}
}

\author{Jinxia Wei}
\email{weijinxia@cnic.cn}
\affiliation{%
  \institution{Computer Network Information Center, Chinese Academy of Sciences}
  \country{}
}
\author{Wan Wei}
\email{wanwei@cnic.cn}
\affiliation{%
  \institution{Computer Network Information Center, Chinese Academy of Sciences}
  \country{}
}

\author{Chun Long}
\email{longchun@cnic.cn}
\affiliation{%
  \institution{Computer Network Information Center, Chinese Academy of Sciences}
  \country{}
}

\author{Degang Sun}
\email{dgsun@cnic.cn}
\affiliation{%
  \institution{Computer Network Information Center, Chinese Academy of Sciences}
  \country{}
}

\author{Peng Gao}
\email{penggao@vt.edu}
\affiliation{%
  \institution{Virginia Tech}
  \country{}
}

\begin{document}

\title{IDSForge: Self-Explaining Network Intrusion Detection Through Neural Behavioral Rule Learning}

\begin{abstract}
Anomaly-based network intrusion detection systems (NIDS) can detect previously unseen attacks by identifying deviations from benign traffic. However, deep learning-based anomaly detectors typically rely on implicit decision boundaries, and post-hoc explanations may not faithfully reflect the detection logic. This makes alerts difficult to interpret and audit. To address this issue, we present IDSForge, an anomaly-based NIDS that directly learns explicit, executable behavioral rules from benign traffic, making its detection logic interpretable by design. IDSForge uses the same rule set for detection and explanation, ensuring explanation fidelity by construction. 
It uses an explicit distributional autoencoder 
trained only on benign samples to learn axis-aligned regions of benign traffic. 
A high-speed rule compiler then consolidates these instance-specific regions into a compact set of feature-level interval constraints. These rules can be executed directly by popular NIDS engines, supporting high-throughput inspection with low overhead. 
To adapt to changes in traffic, a self-distillation co-evolver updates the model and rule set incrementally when the false-positive rate on analyst-verified traffic exceeds a target threshold. 
Our evaluation shows that IDSForge reduces F1 error by $5.7\times$ to $47\times$ relative to state-of-the-art anomaly-based baselines. Deployed in Suricata, IDSForge sustains a throughput of 7.8 Gbps with a median per-packet latency overhead of 3.1 \textmu s. 
\end{abstract}

\begin{CCSXML}
<ccs2012>
   <concept>
       <concept_id>10002978.10002997.10002999</concept_id>
       <concept_desc>Security and privacy~Intrusion detection systems</concept_desc>
       <concept_significance>500</concept_significance>
       </concept>
   <concept>
       <concept_id>10002951.10003260.10003277.10003281</concept_id>
       <concept_desc>Information systems~Traffic analysis</concept_desc>
       <concept_significance>500</concept_significance>
       </concept>
 </ccs2012>
\end{CCSXML}

\ccsdesc[500]{Security and privacy~Intrusion detection systems}
\ccsdesc[500]{Information systems~Traffic analysis}

\keywords{Traffic anomaly detection, Interpretable machine learning,  Behavioral rule learning} 
\maketitle
\pagestyle{plain}

\section{Introduction}
Web services such as online banking, e-commerce, and cloud collaboration are frequent targets of cyberattacks. Network intrusion detection systems (NIDS) deployed at the gateways of these services are an important line of defense~\cite{liao2013intrusion,shi2025helios}, but in real deployments, producing detection results alone is not enough. First, the NIDS inspects every flow to the service, so line-rate processing of Web-scale traffic with little overhead is essential~\cite{zhu2026proteus}. Second, benign traffic keeps changing as the service adds new pages and APIs, while attackers keep developing new techniques, which calls for detecting previously unseen attacks without rejecting new benign behavior~\cite{fu2023detecting}. Third, an alert may lead to costly actions such as blocking a client, so operators need to know \emph{why} a flow is flagged before they respond~\cite{jacobs2022ai,wei2023xnids}. Existing approaches address only some of these needs.

Signature-based NIDS such as Suricata~\cite{suricata2025} match traffic against human-readable rules efficiently, but they rely on manually written attack signatures and therefore miss unseen attacks. Supervised detectors~\cite{shapira2019flowpic,lin2022bert,zhou2025trafficformer} achieve high accuracy on known attacks, but they are trained on a fixed set of attack classes and often generalize poorly to new ones~\cite{fu2022frequency}. In contrast, anomaly-based detectors~\cite{garcia2009anomaly,mirsky2018kitsune,lian2025facing,han2024contramtd} learn only from benign traffic and report deviations as anomalies. Since they need no attack labels and can detect unseen attacks, they are well suited to changing Web environments.

However, anomaly-based detectors are still difficult to trust in practice. Deep anomaly detectors model benign traffic with neural networks, such as autoencoders~\cite{mirsky2018kitsune} or evidential models~\cite{lian2025facing}, and their decision boundaries are defined only implicitly by the learned representations. These boundaries may become outdated as traffic drifts (Figure~\ref{fig:intro1_a}) and may be sensitive to small perturbations (Figure~\ref{fig:intro1_b}), both of which cause many false positives. These detectors are also hard to interpret. Most explanation methods are designed for supervised detectors and explain a trained model \emph{post hoc}~\cite{guo2018lemna,jacobs2022ai,wei2023xnids}. The few methods for anomaly-based detectors either attribute individual alerts to input features~\cite{han2021deepaid} or extract surrogate rules~\cite{li2023interpreting,zhang2024dissect,han2024rules,li2024genos}. Such explanations only approximate the underlying model, and extracting surrogate rules requires costly exploration of the black-box decision boundary. As a result, the explanation given to operators may not match the logic of the deployed detector.

In this paper, we present IDSForge, a self-explaining anomaly-based NIDS. Like existing anomaly-based detectors, IDSForge learns only from benign traffic. Unlike them, its model outputs an \emph{explicit boundary} of benign traffic rather than an anomaly score. During training, the model maps benign flows to hyper-rectangles, and the union of these hyper-rectangles forms the benign region. The hyper-rectangles are then compiled into rules for popular NIDS engines such as Suricata. At runtime, the engine only matches flows against these rules and requires no neural inference. To find a suitable boundary form, we compare six learning paradigms and choose axis-aligned boundaries, as used by tree models, because they describe benign traffic accurately and can be directly expressed as rules. Each hyper-rectangle is a conjunction of range conditions (e.g., \texttt{120} $\le$ \texttt{pkt\_len\_mean} $\le$ \texttt{860}) that operators can read and NIDS engines can execute, and the conditions that a flagged flow violates directly explain the alert.
Realizing this idea involves three challenges.
(\rmnum{1}) \emph{How can the rules describe complex benign behavior while remaining compact enough to deploy?} Fitting a simple tree to a trained black-box detector limits expressiveness to the tree, whereas using many fine-grained regions makes the rule base too large.
(\rmnum{2}) \emph{How can the rules accept unseen benign variants without covering attacks?} The training data contain only part of the benign traffic, since users may visit Web pages that do not appear in it. Regions that are too narrow reject such variants, whereas regions that are too broad admit attacks.
(\rmnum{3}) \emph{How can the rules be updated with newly confirmed benign flows without losing previously learned behavior?} Editing rules directly may introduce conflicts and drop learned behavior, whereas retraining from scratch requires storing all historical traffic.

\begin{figure}[t]
  \centering
  \refstepcounter{figure}\setcounter{subfigure}{0}
  \newcommand{\intropanel}[3]{%
    \begin{minipage}[t]{0.32\linewidth}\centering
      \includegraphics[width=\linewidth]{#1}\par\vspace{1pt}
      {\scriptsize\refstepcounter{subfigure}\label{#2}(\alph{subfigure}) #3\par}
    \end{minipage}}
  \intropanel{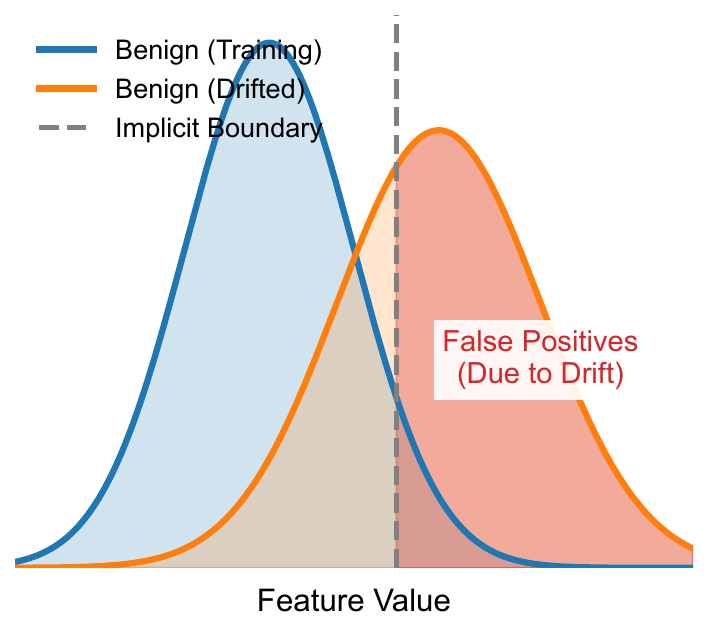}{fig:intro1_a}{Implicit boundaries become outdated under drift.}\hfill
  \intropanel{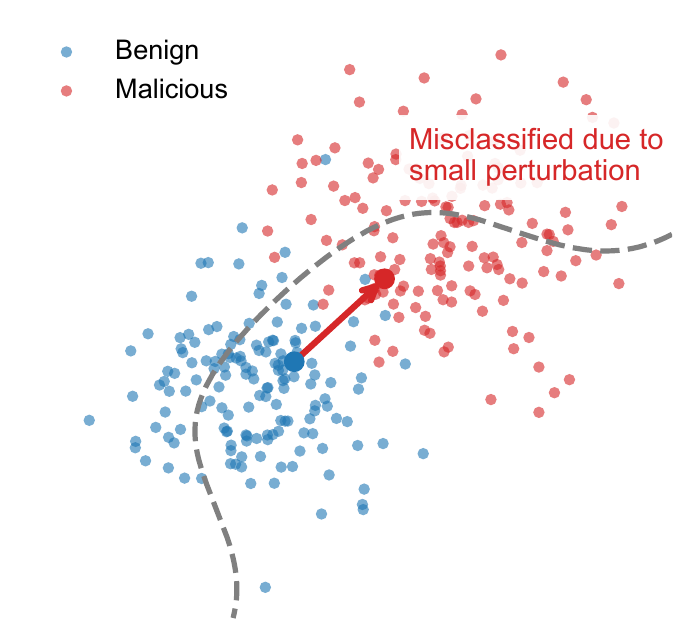}{fig:intro1_b}{Small perturbations cause misclassification.}\hfill
  \intropanel{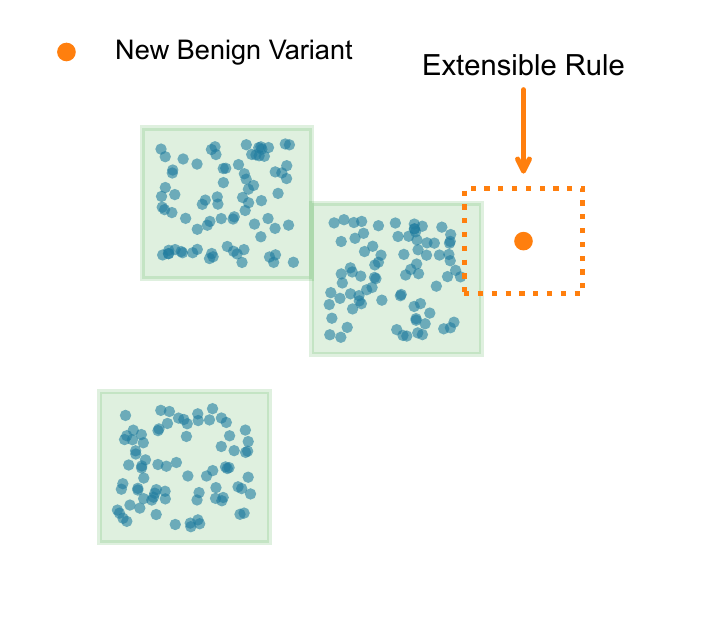}{fig:intro1_c}{IDSForge learns explicit, extensible rules.}
  \addtocounter{figure}{-1}
  \vspace{-0.2cm}
  \caption{Implicit boundaries of existing detectors (dashed lines) versus the explicit rules of IDSForge (green rectangles). Blue and red points are benign and malicious flows.}
  \vspace{-0.3cm}
\end{figure}

IDSForge addresses these challenges with three modules.
\textbf{Explicit distributional autoencoder (\S\ref{sec:bae}).} Conventional autoencoders detect anomalies by reconstruction error. Our explicit distributional autoencoder (EDAE) instead maps each benign flow to a hyper-rectangle that encloses its benign neighborhood, and the union of these hyper-rectangles forms the benign region. For example, an HTTPS flow with a mean packet length of 512 bytes and 14 forward packets is mapped to $430\le$ \texttt{pkt\_len\_mean} $\le 610 \wedge 9\le$ \texttt{fwd\_pkts} $\le 22$. To generalize to unseen benign traffic, EDAE reconstructs each flow from latent variables sampled uniformly within its hyper-rectangle, so the decoder has to recover the flow from any point inside. Additional objectives balance the compactness, generalization, and stability of the hyper-rectangles.
\textbf{High-speed rule compiler (\S\ref{sec:bre}).} The compiler converts the hyper-rectangles into native NIDS rules using templates. To support near-line-rate detection, it reduces the number of rules through Boolean minimization without changing the benign region, and indexes the rules with an R*-tree~\cite{beckmann1990r} for sub-linear matching.
\textbf{Self-distillation co-evolver (\S\ref{sec:blu}).} When the false-positive rate on operator-verified traffic exceeds a threshold, the co-evolver converts the rules of the affected server back into hyper-rectangles and samples benign flows from the latent space of EDAE within them. It then fine-tunes EDAE on these samples and the false positives and regenerates the rules. In this way, the model and the rules are updated together without storing the original training traffic or accumulating conflicting rules.

Our technical contributions are as follows.
\begin{itemize}
  \item We propose IDSForge, a self-explaining anomaly-based NIDS that trains a model to output explicit benign boundaries instead of anomaly scores, so that detection and explanation are based on the same executable rules.
  \item We design EDAE, an autoencoder that learns explicit bounds for each feature, which cover unseen benign traffic while keeping false positives low.
  \item We develop a high-speed rule compiler that converts the learned boundaries into native NIDS rules and indexes them for sub-linear matching, enabling high-throughput detection.
  \item We introduce a self-distillation co-evolver that refines EDAE with false positives and benign flows sampled within existing rules, so that the model and the rules adapt to traffic drift together.
  \item We prove that rule reduction is lossless, that the explanations are faithful and provide certified decision margins, and that the co-evolver does not forget previously learned rules (Appendix~\ref{sec:theory}).
\end{itemize}

We evaluate IDSForge on six public datasets and compare it with two supervised detectors, seven anomaly-based detectors, and four post-hoc explainers. We also deploy the learned rules in Suricata for online detection on traffic replayed at up to 10 Gbps. Our main findings are as follows.
\begin{itemize}
  \item \textbf{IDSForge accurately detects unseen attacks without attack labels.} It achieves F1 scores of at least 98.79\% on all six datasets, reduces the F1 error of the best anomaly-based baseline by $5.7\times$ to $47\times$, and performs on par with supervised detectors.
  \item \textbf{The explanations of IDSForge are faithful by design.} Explanations come from the detection rules themselves, whereas even rule extractors designed for security applications disagree with the detector they explain on up to 7.17\% of flows. Each rule is a short conjunction of conditions that operators can audit, edit, or remove.
  \item \textbf{IDSForge runs at high speed with low overhead.} In Suricata, it sustains 7.8 Gbps with a median per-packet latency overhead of 3.1~\textmu s, 0.31\% additional CPU utilization, and 42 MB of memory. Compiling the 15,850 rules of CIC-IDS2017 takes 0.8 seconds, and the online F1 scores are within 2.11 percentage points of the offline results.
  \item \textbf{IDSForge remains effective over long-term deployment.} Trained on one day of benign traffic, it achieves an F1 score of 97.55\% on a week of attacks in CIC-IDS2017, which rises to 99.65\% as more days are added. In comparison, the two strongest supervised baselines reach only 78.57\% and 63.29\% after two days.
\end{itemize}

Overall, IDSForge balances detection accuracy, interpretability, and efficiency. It provides end-to-end interpretable detection, in which the same explicit rules are used both to detect attacks and to explain alerts, making IDSForge practical for real-world deployment.

\section{Background and Motivation}
\label{sec:background}

\subsection{Problem Statement}
\label{sec:problem}
We consider an NIDS deployed at the gateway of a protected network, e.g., an enterprise perimeter or a cloud-hosted Web service. Each flow is a min-max normalized feature vector $x\in[0,1]^n$. The training set $\mathcal{D}$ contains only benign traffic without attack labels, and operators can confirm false positives during operation. The goal is to learn a detector $f:[0,1]^n\to\{0,1\}$ that flags unseen attacks as anomalies ($f(x)=1$).

\subsection{Anomaly-Based Network Intrusion Detection}
\label{sec:bg-detection}
Anomaly-based NIDS model benign traffic alone and flag deviations as attacks, so they can detect unseen attacks without attack labels~\cite{garcia2009anomaly,fu2022frequency}. Existing detectors characterize benign behavior by reconstruction or density~\cite{mirsky2018kitsune,marteau2021random,ruff2018deep}, frequency-domain patterns~\cite{fu2021realtime}, interaction structure~\cite{fu2023detecting,han2024contramtd}, or calibrated confidence~\cite{zhao2024trident,lian2025facing}. These detectors compare a scalar score with a threshold, so the benign region is only \emph{implicitly} defined by an opaque function. Such implicit boundaries are hard to interpret, become stale under traffic drift, and are sensitive to minor perturbations, leading to high false-positive rates.

\subsection{Limitations of Post-Hoc Explanation}
\label{sec:bg-interp}
Existing work explains a trained detector post hoc, either by attributing alerts to input features~\cite{guo2018lemna,wei2023xnids,han2021deepaid,han2023anomaly,yang2021cade} or by distilling the detector into rules or trees~\cite{jacobs2022ai,li2023interpreting,zhang2024dissect,han2024rules,li2024genos}. These approaches have two limitations. First, most of them, including rule-based in-network NIDS, require a supervised detector trained on attack labels~\cite{guo2018lemna,wei2023xnids,jacobs2022ai,zhou2023netbeacon,zhu2026proteus,shi2025helios}, and thus generalize poorly to unseen attacks. Even explainers for anomaly-based detectors~\cite{han2021deepaid,li2023interpreting,han2024rules} must first train and probe a separate detector. Second, a post-hoc explanation only approximates the detector after training and may disagree with it on some flows~\cite{li2023interpreting,jacobs2022ai,han2024rules}. Thus, the rule shown to operators may not reflect the alert logic~\cite{rudin2019stop}, and obtaining it requires costly boundary exploration or per-alert computation.

\subsection{Motivation}
\label{sec:motivation}
These limitations stem from separating detection from explanation. We therefore aim to learn bounded rules directly, i.e., a model that outputs an \emph{explicit boundary} of benign traffic, which serves as both the detector and its explanation. The form of this boundary is central to this goal. We compare six learning paradigms with different boundaries on CIC-IDS2017 (Table~\ref{table:motivationConstract} in Appendix~\ref{sec:appendix1}), among which tree models, whose boundaries are axis-aligned, achieve the highest F1 score (99.4\%) and fast inference. Axis-aligned boundaries have also proven effective for interpreting intrusion detectors~\cite{jacobs2022ai,li2023interpreting,han2024rules}, and NIDS engines natively support them as range conditions~\cite{suricata2025}. We thus represent the benign boundary as a union of tree-like axis-aligned hyper-rectangles $\{H_k\}$, i.e., $f(x)=\mathbb{1}[x\notin\bigcup_k H_k]$. Each $H_k$ is a readable conjunction of range conditions (e.g., \texttt{120} $\le$ \texttt{pkt\_len\_mean} $\le$ \texttt{860}), and an alert is explained by its violated conditions.

Two further design choices make such boundaries learnable from benign traffic alone. First, we learn the rules separately for each destination server IP. Traffic toward the same server is relatively stable, as it is driven by the services the server provides~\cite{xu2005profiling,karagiannis2005blinc}, so a few hyper-rectangles can enclose its benign behaviors~\cite{sommer2010outside}. Second, we build on the autoencoder, which suits modeling the latent distribution of benign traffic and is widely used in anomaly-based NIDS~\cite{mirsky2018kitsune,zhao2024trident,lian2025facing}. Sampling from its latent space also generates synthetic benign flows, so the model and the rules can be co-updated without storing historical traffic.

\section{IDSForge Design}
\label{sec:design}

\subsection{Overview}
\label{sec:overview}

\begin{figure}[t]
  \centering
  \includegraphics[width=0.92\linewidth]{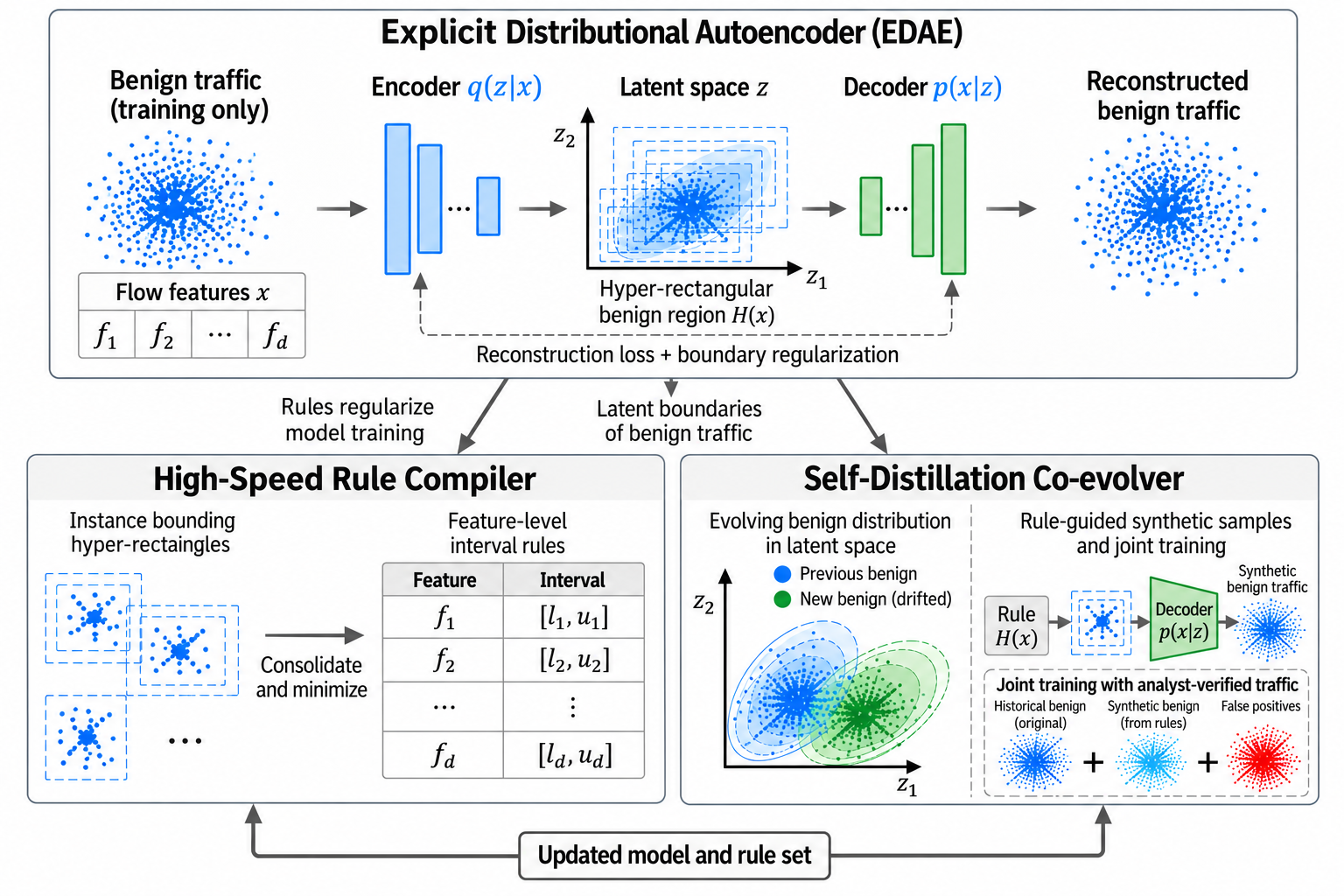}
  \vspace{-0.5cm}
  \caption{Workflow of IDSForge and its three modules.}
  \label{fig:sysOverview}
  \vspace{-0.4cm}
\end{figure}

To address the challenges of implicit decision boundaries and post-hoc explanation, IDSForge aims to learn explicit boundaries of benign traffic from benign flows alone and deploy them directly as NIDS rules, so that detection and explanation share the same logic. Since traffic toward the same server IP is relatively stable~\cite{xu2005profiling,karagiannis2005blinc}, IDSForge keeps a separate rule base for each destination server IP, which is built and maintained by three modules (Figure~\ref{fig:sysOverview}).
(1) \textbf{Explicit distributional autoencoder.} EDAE maps each benign flow to a hyper-rectangle in a latent space aligned with the input features, so each latent variable is the bound of a feature. It also predicts a mask that selects the features to constrain, and is trained with five complementary objectives that keep them tight, generalizable, and stable.
(2) \textbf{High-speed rule compiler.} The compiler reads them in a single forward pass, losslessly reduces the rules of each server by Boolean minimization, indexes them with an R*-tree for sub-linear matching, and translates them into native NIDS rules through templates. At runtime, a flow that matches no rule is flagged as malicious, and the violated conditions of its nearest rule explain the alert.
(3) \textbf{Self-distillation co-evolver.} The co-evolver collects operator-confirmed false positives. When their rate exceeds a threshold, it generates benign flows by sampling the latent space within the existing rules, fine-tunes EDAE on them together with the false positives, and regenerates the rules, so that the model and the rules co-evolve without forgetting previously accepted benign traffic.
Appendix~\ref{sec:theory} analyzes the three modules theoretically.

\subsection{Explicit Distributional Autoencoder}
\label{sec:bae}
This module learns the explicit boundary of benign traffic. For each benign flow, EDAE outputs a hyper-rectangle bounding its benign neighborhood in the original feature space, which later becomes a detection rule. EDAE (Figure~\ref{fig:BAE}) builds on the autoencoder, a simple and effective backbone for anomaly-based traffic detection~\cite{mirsky2018kitsune,zhao2024trident,lian2025facing}, but outputs hyper-rectangular explicit distribution boundaries.

\begin{figure}[t]
  \centering
  \includegraphics[width=\linewidth]{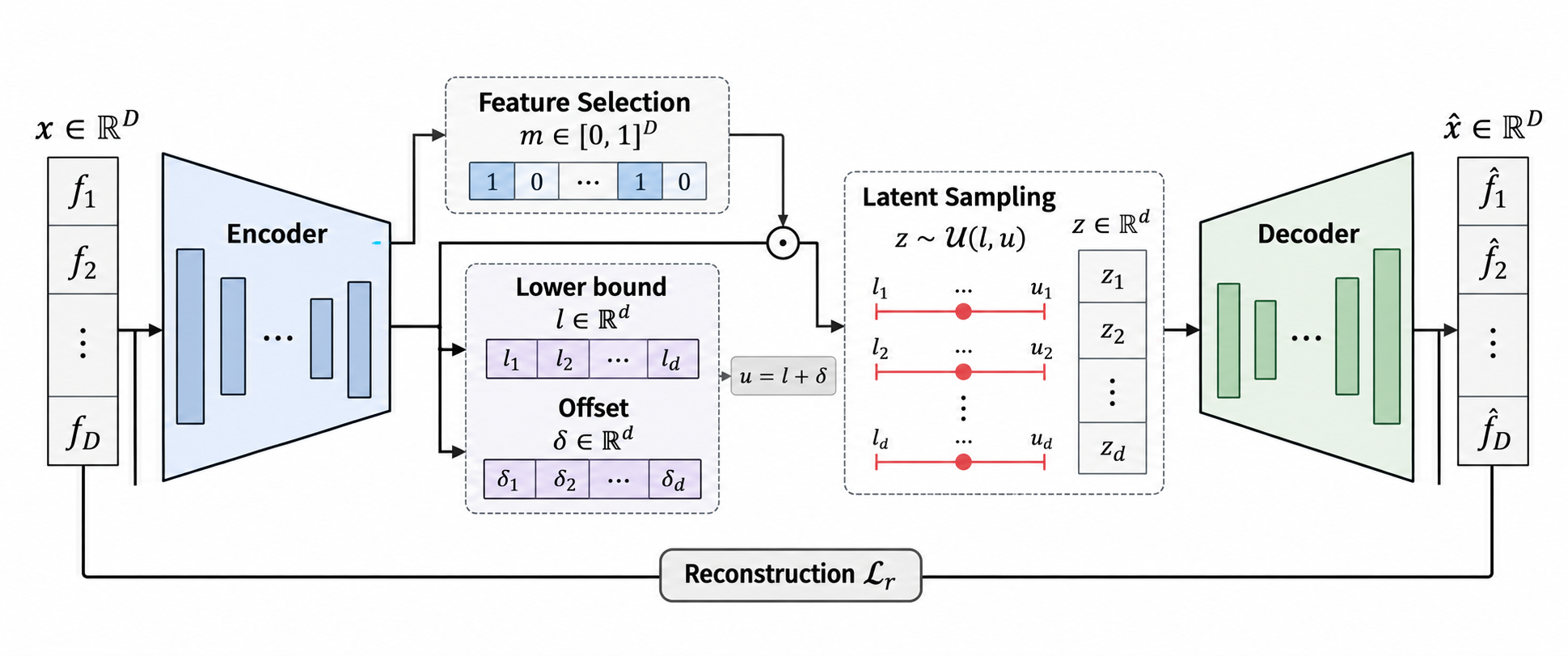}
  \vspace{-0.5cm}
  \caption{Architecture of EDAE, which maps a flow to per-feature bounds and decodes samples within them.}
  \label{fig:BAE}
  \vspace{-0.4cm}
\end{figure}

\begin{figure*}[t]
  \centering
  \providecommand{\dyn}[1]{{\setlength{\fboxsep}{0.4pt}\colorbox{cAlt!30}{#1}}}%
  \begin{tikzpicture}[
      font=\scriptsize,
      panel/.style={draw=#1!55!black!45, fill=#1!7, rounded corners=2.5pt, line width=0.5pt,
                    inner sep=3pt, minimum height=2.3cm, anchor=north west, align=left},
      hdr/.style={font=\scriptsize\bfseries, text=black!80, anchor=south west, inner sep=1.5pt},
      arr/.style={-{Latex[length=1.8mm,width=1.5mm]}, line width=0.8pt, black!50},
      kw/.style={},
    ]
    \node[panel=cPal, text width=2.2cm] (a) at (0,0) {%
      \begin{tabular}{@{}l@{\hspace{5pt}}r@{}}
        \texttt{pkt\_len\_mean} & 512\,B\\
        \texttt{fwd\_pkts} & 14\\
        \texttt{syn\_count} & 1\\
        \texttt{iat\_mean} & 0.21\,s\\
        \texttt{dst\_port} & 443\\
        \multicolumn{2}{@{}l@{}}{\textcolor{black!45}{\ldots\ ($n$ features)}}
      \end{tabular}};
    \node[hdr] at (a.north west) {(a) Benign flow $x$};
    \node[panel=cThr, text width=2.95cm, right=3mm of a.north east, anchor=north west] (b) {%
      \begin{tabular}{@{}l@{\hspace{4pt}}c@{\hspace{4pt}}c@{}}
        & $[l_j,u_j]$ & $\hat m_j$\\[1pt]
        \texttt{pkt\_len\_mean} & $[430,610]$ & 1\\
        \texttt{fwd\_pkts} & $[9,22]$ & 1\\
        \texttt{syn\_count} & $[0,2]$ & 1\\
        \textcolor{black!40}{\texttt{iat\_mean}} & \textcolor{black!40}{masked} & \textcolor{black!40}{0}\\
        \textcolor{black!40}{\texttt{dst\_port}} & \textcolor{black!40}{masked} & \textcolor{black!40}{0}
      \end{tabular}};
    \node[hdr] at (b.north west) {(b) EDAE bounds and mask};
    \node[panel=cAlt, text width=3.45cm, right=3mm of b.north east, anchor=north west] (c) {%
      \begin{tabular}{@{}l@{}}
        \texttt{IF}\ \ $120\le$ \texttt{pkt\_len\_mean} $\le 860$\\
        \quad\ $\wedge\ \ 6\le$ \texttt{fwd\_pkts} $\le 40$\\
        \quad\ $\wedge\ \ 0\le$ \texttt{syn\_count} $\le 4$\\
        \texttt{THEN} benign\\[2pt]
        \textcolor{black!45}{\texttt{iat\_mean}, \texttt{dst\_port} unconstrained}
      \end{tabular}};
    \node[hdr] at (c.north west) {(c) Reduced rule $H_{17}$ of server 10.0.0.5};
    \node[panel=black, fill=black!4, draw=black!30, text width=6.75cm, right=3mm of c.north east, anchor=north west] (d) {%
      \fontsize{6}{6.9}\selectfont\ttfamily\setlength{\fboxsep}{0.4pt}%
      \textcolor{cThr!75!black}{\textbf{pass}} tcp any any -> \dyn{10.0.0.5} any (msg:"IDSForge \dyn{H17}";\\
      \ \ flow:established; lua:\dyn{h17}.lua; sid:\dyn{9000017};)\\
      \textcolor{red!70!black}{\textbf{alert}} tcp any any -> \dyn{10.0.0.5} any (msg:"IDSForge alert";\\
      \ \ sid:\dyn{9000000};)\\[1pt]
      \textcolor{black!50}{-{}- \dyn{h17}.lua, fv(i) reads exported flow feature i}\\
      \textcolor{cPal!80!black}{function} match(args)\\
      \ \ local \dyn{m, f, c} = \dyn{fv(0), fv(1), fv(2)}\\
      \ \ if \dyn{m>=120 and m<=860 and f>=6 and f<=40}\\
      \ \ \ \ \ and \dyn{c<=4} then return 1 end\\
      \ \ return 0 \textcolor{cPal!80!black}{end}};
    \node[hdr] at (d.north west) {(d) Suricata rules (simplified)};
    \node[anchor=south east, inner sep=1.5pt, font=\scriptsize, text=black!70] at (d.north east)
      {\setlength{\fboxsep}{0.6pt}\colorbox{cAlt!30}{\ttfamily\,filled\,} from $H_{17}$\quad\fcolorbox{black!30}{black!4}{\ttfamily\,template\,} fixed};
    \draw[arr] (a.east) -- (b.west);
    \draw[arr] (b.east) -- (c.west);
    \draw[arr] (c.east) -- (d.west);
    \node[panel=red, fill=red!5, draw=red!45, minimum height=0pt, inner sep=3.5pt, text width=17.15cm,
          anchor=north west] (e) at ([yshift=-1.8mm]a.west |- d.south) {%
      \textbf{SYN flood $x'$.}\ \texttt{pkt\_len\_mean}\,=\,60, \texttt{fwd\_pkts}\,=\,38, \texttt{syn\_count}\,=\,37
      $\Rightarrow$ no \textbf{pass} rule $\Rightarrow$ \textcolor{red!70!black}{\textbf{alert}}.\ \
      \textbf{Explanation.}\ Nearest rule $H_{17}$, violated by \texttt{pkt\_len\_mean}\,$<$\,120 and \texttt{syn\_count}\,$>$\,4.};
  \end{tikzpicture}
  \vspace{-0.2cm}
  \caption{Rule compilation for a benign HTTPS flow. (a) Flow features. (b) EDAE bounds and mask. (c) Reduced rule $H_{17}$. (d) Suricata rules with fields filled from $H_{17}$ highlighted. The bottom bar explains an alert for a SYN flood flow.}
  \label{fig:ruleExample}
  \vspace{-0.3cm}
\end{figure*}

\noindent\textbf{Isomorphic interval space.}
Conventional autoencoders compress the input into a low-dimensional code~\cite{mienye2025deep} whose dimensions lack feature meaning. EDAE instead uses a latent space isomorphic to the input, where the $j$-th dimension encodes the bounds $[l_j,u_j]$ of the $j$-th feature. The encoder outputs a lower bound $l(x)$ and obtains the upper bound $u(x)$ by adding a squared offset, which guarantees $u_j\ge l_j$. Each flow thus maps to an instance-specific hyper-rectangular explicit distribution boundary describing its benign neighborhood,
\begin{equation}
H(x)=\prod_{j}\big[l_j(x),\,u_j(x)\big].
\end{equation}

\noindent\textbf{Feature-axis attention.}
A lightweight attention module reweights the input features and adds the result back through a residual connection. The residual form lets the encoder predict bounds as input offsets and stabilizes training.

\noindent\textbf{Feature selection.}
To constrain only useful features, a two-layer network with a sigmoid output maps the refined input to a soft mask $m(x)\in[0,1]^n$ that gates the bounds element-wise. Masked features, such as bursty timing or static fields, shrink to near-zero widths in training and thus drop out of the volume-based objectives. The mask is binarized at 0.5 into $\hat m$, and only the active features $\mathcal{A}(x)$ with $\hat m_j=1$ appear in the rule, which keeps rules concise.

\noindent\textbf{Uniform in-rectangle sampling.}
Unlike the unbounded Gaussian latent of conventional variational autoencoders, EDAE samples the latent variable uniformly within the bounds by reparameterizing $\epsilon\sim\mathcal{N}(0,I)$ as
\begin{equation}
z=l+\Phi(\epsilon)\odot(u-l),
\end{equation}
where $\Phi$ is the standard normal CDF, so $z\sim\mathcal{U}(l,u)$ by the probability integral transform. The decoder reconstructs $\hat{x}=g(z)$, so it must recover the input from any point of $H(x)$.

\noindent\textbf{Optimization objectives.}
EDAE jointly optimizes five complementary objectives over a mini-batch $\mathcal{B}$, computed on the active features.

\noindent\emph{(1) Reconstruction loss.} It requires recovering the input from points throughout $H(x)$, keeping the bounds consistent with it,
\begin{equation}
\mathcal{L}_1=\mathbb{E}_{z\sim\mathcal{U}(l,u)}\,\|x-g(z)\|_2^2.
\end{equation}

\noindent\emph{(2) Distribution loss.} It encourages $H(x)$ to cover its benign neighborhood rather than collapse to a point. We impose a uniform prior $p(z)=\mathcal{U}([0,1]^n)$, which factorizes over features and favors no region. With $\tilde q(z|x)=\mathcal{U}(l(x),u(x))$ and $H(x)\subseteq[0,1]^n$, the KL divergence has the closed form
\begin{equation}
\mathcal{L}_2=\mathrm{KL}\big(\tilde q(z|x)\,\|\,p(z)\big)=-\sum_{j}\log\big(u_j(x)-l_j(x)\big),
\end{equation}
which discourages collapse onto single samples and helps generalize to unseen benign variants.

\noindent\emph{(3) Density loss.} It favors a tight $H(x)$ containing many benign flows per unit volume,
\begin{equation}
\mathcal{L}_3=\frac{1}{|\mathcal{B}|}\sum_{x\in\mathcal{B}}\frac{V(x)}{N_{\mathrm{in}}(x)+\varepsilon},
\end{equation}
where $V(x)=\prod_{j\in\mathcal{A}(x)}(u_j-l_j)$ is the volume of $H(x)$, $N_{\mathrm{in}}(x)$ counts the flows of $\mathcal{B}$ inside $H(x)$, and $\varepsilon>0$ avoids division by zero. Since counting is non-differentiable, we use the smooth indicator
\begin{equation}
N_{\mathrm{in}}(x)\approx\sum_{x'\in\mathcal{B}}\prod_{j\in\mathcal{A}(x)}\sigma\Big(\frac{x'_j-l_j}{\tau}\Big)\,\sigma\Big(\frac{u_j-x'_j}{\tau}\Big),
\end{equation}
with the sigmoid $\sigma$ and temperature $\tau$ ($\tau=0.05$ by default). The product approaches 1 inside $H(x)$ and 0 outside, and becomes exact as $\tau\to 0$ (Proposition~\ref{prop:density}). $\mathcal{L}_3$ shrinks $H(x)$ toward dense benign regions, while $\mathcal{L}_2$ prevents over-contraction.

\noindent\emph{(4) Interval loss.} It keeps the bounds numerically stable,
\begin{equation}
\mathcal{L}_4=\lambda_1\|u-l\|_1+\lambda_2\sum_j\sigma(u_j-x_j)+\lambda_3\sum_j\sigma(x_j-l_j),
\end{equation}
where the first term penalizes overly wide intervals and the other two grow as the bounds move away from the flow, keeping the interval close to it. The saturating sigmoid bounds their pull, so they do not dominate $\mathcal{L}_2$ and $\mathcal{L}_3$.

\noindent\emph{(5) Entropy loss.} Jointly optimizing many features may yield a loose $H(x)$ that shifts uniformly along correlated directions. We normalize the margins into distributions $\pi^{u}=(u-x)/\|u-x\|_1$ and $\pi^{l}=(x-l)/\|x-l\|_1$ and minimize
\begin{equation}
\mathcal{L}_5=-\big(\mathrm{H}(\pi^{u})+\mathrm{H}(\pi^{l})\big),
\end{equation}
i.e., maximize their Shannon entropy. This prevents the margins from concentrating on a few features and adapts its shape to the variance of each feature.

The overall objective $\mathcal{L}$ is a weighted sum of the five terms, whose weights balance the gradient magnitudes of the terms (Figures~\ref{fig:sens-single}, \ref{fig:sens-inter}, and \ref{fig:sens-sobol} in Appendix~\ref{sec:sensitivity}).

\subsection{High-Speed Rule Compiler}
\label{sec:bre}

This module turns the learned $H(x)$ into a compact rule base that NIDS engines execute natively. Unlike post-hoc extractors that explore a black box~\cite{li2023interpreting,li2024genos}, it reads them from EDAE in a single forward pass (Figure~\ref{fig:ruleExample}).

\noindent\textbf{Rule generation.}
Each benign flow $x$ yields the rule $H(x)=\prod_{j\in\mathcal{A}(x)}[l_j(x),u_j(x)]$, i.e., \texttt{IF} $\bigwedge_{j\in\mathcal{A}(x)} l_j(x)\le x_j\le u_j(x)$ \texttt{THEN} benign, whose bounds are mapped back to raw feature units by inverting the min-max normalization. Grouped by destination server IP, a server receiving many diverse benign flows is described by a large set of overlapping hyper-rectangles $\mathcal{H}_s$, whose union $R(\mathcal{H}_s)$ is its benign region.

\noindent\textbf{Rule acceleration.}
Matching thousands of overlapping rules is too slow for line rate, so we use two techniques.

\emph{(1) Lossless rule reduction.} We decompose $R(\mathcal{H}_s)$ into as few hyper-rectangles as possible without changing the benign region, i.e., orthogonal polygon decomposition, which is NP-hard beyond two dimensions~\cite{dielissen1991rectangular}. We thus cast it as two-level Boolean minimization~\cite{brayton1984logic,rudell1987multiple}. The bounds in $\mathcal{H}_s$ cut each feature $j$ at a sorted set of points, and each cut point $c$ with index $\kappa_j(c)$ is encoded by a threshold variable $b_{j,\kappa_j(c)}=\mathbb{1}[x_j\ge c]$. The condition on feature $j$ then becomes
\begin{equation}
l\le x_j<u\ \iff\ b_{j,\kappa_j(l)}\wedge\neg\, b_{j,\kappa_j(u)},
\end{equation}
so each hyper-rectangle is a product term and $R(\mathcal{H}_s)$ is a sum of products $F_s$. Assignments that violate the order of the cut points never occur and form the don't-care set $\mathcal{C}$. We seek the fewest product terms $F'$ that cover $F_s$ without exceeding $F_s\vee\mathcal{C}$, which Espresso~\cite{brayton1984logic} solves heuristically by iterating three operations until $|F'|$ stops decreasing. Geometrically, \textsc{Expand} grows a hyper-rectangle until it hits the OFF-set $\neg(F_s\vee\mathcal{C})$, \textsc{Irredundant} drops those covered by others, and \textsc{Reduce} shrinks them to escape local minima. Since a conjunction of threshold literals on a feature is an interval, each product term decodes to a hyper-rectangle, so $\mathcal{H}'_s$ covers the same benign region with far fewer rules (Proposition~\ref{prop:reduction}).

\emph{(2) R-tree indexing.} We index $\mathcal{H}'_s$ with an R*-tree~\cite{beckmann1990r}, whose overlap- and margin-minimizing nodes suit point queries over overlapping rules. Each node $N$ stores the bounding box $B(N)$ of the rules beneath it. For high node utilization, the tree is bulk-loaded by Sort-Tile-Recursive packing~\cite{leutenegger1997str} into leaves of capacity $M$, and a rule added by the co-evolver descends into the child whose overlap with its siblings grows least. A point query recurses only into children whose bounding boxes contain $x$, finding matching rules in sub-linear time, and a flow is benign if at least one rule matches. A best-first search over $d_\infty(x,B(N))$, which lower-bounds the distance to the rules under $N$, finds the nearest rule of a flagged flow. Algorithm~\ref{alg:bre} in Appendix~\ref{sec:alg} summarizes the compilation.

\noindent\textbf{Rule translation.}
For an NIDS engine $E$, a template $\tau_E$ turns each reduced rule $H\in\mathcal{H}'_s$ into a \texttt{pass} rule $\tau_E(H,s)$ that matches flows to server $s$ satisfying all conditions of $H$, written as rule options or Lua functions in Suricata. A default rule $\alpha_E(s)$ raises an \texttt{alert} for every flow to $s$. Since pass rules precede alert rules, a flow is flagged exactly when it matches none of the rules of its server. An alert is explained by the nearest rule
\begin{equation}
H^*=\arg\min_{H\in\mathcal{H}'_s}d_\infty(x,H)
\end{equation}
and its violated conditions, where $d_\infty(x,H^*)$ bounds the feature change needed for evasion (Proposition~\ref{prop:margin}). As plain-text conjunctions with unique identifiers, rules can be audited, tightened, relaxed, disabled, or complemented by expert rules, taking effect after a reload without retraining.

\subsection{Self-Distillation Co-Evolver}
\label{sec:blu}
This module updates the model and the rules under traffic drift at low cost. Drift from new Web services and user behaviors raises false positives. Retraining from scratch must store all historical traffic, which is costly and raises privacy concerns, while patching rules directly may accumulate overlapping or contradictory rules over time and break their consistency with the model. The co-evolver thus regenerates rules through EDAE. Operators confirm false positives during alert triage, and the co-evolver buffers those of server $s$ in $\mathcal{F}_s$. An update is triggered only when the false-positive rate over a sliding window exceeds $\theta_{\mathrm{FP}}$, which avoids frequent retraining caused by sporadic false alarms.

\noindent\textbf{Self-distillation.}
Through self-distillation, the co-evolver replaces stored traffic with flows generated by EDAE itself~\cite{shin2017continual}. It inverts each rule of $\mathcal{H}'_s$ into a normalized hyper-rectangle $[l_H,u_H]$ by min-max normalizing its bounds, with inactive features at $[0,1]$, samples latent variables within each as in \S\ref{sec:bae}, and decodes them into synthetic benign flows,
\begin{equation}
\mathcal{G}_s=\{g(z)\mid z\sim\mathcal{U}(l_H,u_H),\ H\in\mathcal{H}'_s\}.
\end{equation}
As $\mathcal{L}_1$ trains the decoder to reconstruct benign flows from points throughout their rules, $\mathcal{G}_s$ approximates the benign flows from which the rules were learned, rather than arbitrary points in the rules. The co-evolver thus needs neither the original training traffic nor per-flow history.

\noindent\textbf{Model and rule co-evolution.}
EDAE is fine-tuned with $\mathcal{L}$ on $\mathcal{G}_s\cup\mathcal{F}_s$. The generated flows anchor the rules of previously learned behaviors, and the false positives extend them to the new benign behaviors, so the model adapts without catastrophic forgetting~\cite{kirkpatrick2017overcoming}. The refined EDAE then regenerates the rules of server $s$ from $\mathcal{G}_s\cup\mathcal{F}_s$, which are merged with $\mathcal{H}'_s$, reduced by Boolean minimization, and re-indexed in the R*-tree. Since the rules are produced by the model rather than patched by hand, the rule base stays consistent with EDAE and avoids long-term rule conflicts. As the merged union is reduced losslessly, accepted flows stay accepted (Proposition~\ref{prop:forgetting}). The update only affects server $s$, so the NIDS reloads only its rules, and other servers remain untouched.

\addtocounter{table}{1}
\begin{table*}[!t]
  \centering
  \caption{Binary detection results (\%). Green cells mark the best result in each column, and the shaded row is IDSForge.} 
  \label{tab:BinCsy}
  \tabstyle
  \begin{tabularx}{0.9\textwidth}{l *{9}{>{\centering\arraybackslash}X}}
    \tabtop
    \rowcolor{tabHead} \textbf{Model} & \multicolumn{3}{c}{\textbf{CIC-IDS2017}} & \multicolumn{3}{c}{\textbf{CSE-CIC-IDS2018}} & \multicolumn{3}{c}{\textbf{CIC-UNSW-NB15}} \\
    \cmidrule(lr){2-4} \cmidrule(lr){5-7} \cmidrule(lr){8-10}
    \rowcolor{tabHead} & ACC & AUC & F1 & ACC & AUC & F1 & ACC & AUC & F1 \\
    \tabmid
    ET-BERT     & 99.63 & \best{99.87} & 99.63 & 97.46 & 98.12 & 97.49 & 95.02 & 95.84 & 94.89 \\
    TMG-GAN      & 98.30 & 98.32 & 98.30 & 98.15 & 96.47 & 96.34 & 98.04 & 98.45 & 98.04 \\
    \tabmid
    Whisper     & 90.06 & 91.24 & 90.49 & 89.57 & 90.68 & 90.04 & 91.49 & 92.35 & 91.81 \\
    HyperVision & 91.17 & 92.05 & 91.51 & 90.91 & 91.84 & 91.27 & 90.42 & 91.36 & 90.82 \\
    Kitsune     & 57.13 & 56.19 & 54.99 & 82.76 & 83.27 & 82.42 & 76.73 & 77.69 & 76.11 \\
    Trident     & 77.42 & 82.99 & 75.17 & 80.34 & 82.15 & 81.96 & 78.85 & 81.12 & 80.72 \\
    UnDiff      & 83.31 & 88.88 & 83.72 & 82.31 & 84.76 & 83.63 & 81.98 & 84.15 & 83.35 \\
    Diff-RF     & 90.03 & 90.85 & 90.01 & 92.38 & 93.14 & 92.26 & 88.96 & 89.72 & 88.85 \\
    ContraMTD   & 93.92 & 94.86 & 94.14 & 92.37 & 96.48 & 94.82 & 93.96 & 95.12 & 94.17 \\
    \tabmid
    \rowcolor{tabOurs} \textbf{IDSForge} & \best{99.65} & 99.60 & \best{99.65} & \best{99.89} & \best{99.89} & \best{99.89} & \best{98.98} & \best{99.24} & \best{98.97} \\
    \tabmid
    \rowcolor{tabHead} & \multicolumn{3}{c}{\textbf{TON-IoT}} & \multicolumn{3}{c}{\textbf{Kitsune}} & \multicolumn{3}{c}{\textbf{CICIoT2023}} \\
    \cmidrule(lr){2-4} \cmidrule(lr){5-7} \cmidrule(lr){8-10}
    \rowcolor{tabHead} & ACC & AUC & F1 & ACC & AUC & F1 & ACC & AUC & F1 \\
    \tabmid
    ET-BERT     & 98.42 & 98.75 & 98.36 & 97.85 & 98.12 & 97.69 & 96.84 & 97.31 & 96.72 \\
    TMG-GAN      & 98.24 & 98.10 & 98.24 & 96.74 & 96.15 & 96.08 & 97.18 & 96.85 & 96.94 \\
    \tabmid
    Whisper     & 89.34 & 90.12 & 89.65 & 88.45 & 89.76 & 88.91 & 88.72 & 89.54 & 88.39 \\
    HyperVision & 90.48 & 91.35 & 90.82 & 89.62 & 90.84 & 89.95 & 89.85 & 90.76 & 90.12 \\
    Kitsune     & 79.94 & 81.25 & 79.48 & 85.14 & 86.66 & 84.72 & 82.15 & 79.84 & 78.36 \\
    Trident     & 79.16 & 81.42 & 80.05 & 78.63 & 80.95 & 79.18 & 77.92 & 80.14 & 78.65 \\
    UnDiff      & 82.75 & 84.18 & 83.12 & 83.40 & 85.12 & 83.85 & 82.69 & 83.25 & 81.93 \\
    Diff-RF     & 89.72 & 90.45 & 89.58 & 87.35 & 87.92 & 87.16 & 88.46 & 89.15 & 88.24 \\
    ContraMTD   & 93.15 & 94.68 & 93.42 & 91.84 & 93.52 & 92.19 & 92.64 & 94.10 & 92.88 \\
    \tabmid
    \rowcolor{tabOurs} \textbf{IDSForge} & \best{99.48} & \best{99.62} & \best{99.48} & \best{99.12} & \best{99.38} & \best{99.15} & \best{98.76} & \best{99.14} & \best{98.79} \\
    \tabbot
  \end{tabularx}
  \vspace{-0.05cm}
\end{table*}

\section{Evaluation}
\label{sec:eval}
We evaluate IDSForge through five research questions. \textbf{RQ1.} How accurately does IDSForge detect malicious traffic compared with state-of-the-art methods? \textbf{RQ2.} Are the learned rules interpretable, and how do they compare with post-hoc explainers? \textbf{RQ3.} Can IDSForge detect attacks continuously as traffic accumulates? \textbf{RQ4.} How does each module and objective contribute to detection performance? \textbf{RQ5.} Can IDSForge run in popular NIDS engines at near line rate?

\subsection{Experimental Setup}
\noindent\textbf{Datasets.} We use six public NIDS datasets covering enterprise, IoT, and hybrid environments (Table~\ref{tab:datasets}). We split each dataset 9:1 into training and test sets. Supervised detectors are trained on all training flows, whereas anomaly-based detectors and IDSForge use only the benign ones, and all methods share one test set with equal numbers of benign and malicious flows. The continuous detection experiment instead splits traffic by day, and the deployment experiment replays the test sets.

\setcounter{table}{0}
\begin{table}[t]
  \centering
  \caption{Dataset statistics from the released CSV files. Dur. is the capture duration, $\sim$ marks approximate values, \#Classes counts attack classes, and ``--'' means not published.}
  \label{tab:datasets}
  \tabstyle
  \setlength{\tabcolsep}{1.6pt}
  \rowcolors{2}{}{black!3}
  \begin{tabular*}{\linewidth}{@{\hspace{\tabcolsep}\extracolsep{\fill}}llrrrrr}
    \tabtop
    \rowcolor{tabHead} \textbf{Dataset} & Environment & \#Feat. & Size & Dur. & \makecell[r]{\#Benign\,/\\\#Malicious} & \makecell[r]{\#Cla-\\sses} \\
    \tabmid
    CIC-IDS2017     & Enterprise, 25 users & 78  & 51\,GB  & 5\,d & 2.27M\,/\,0.56M & 7 \\
    CSE-CIC-IDS2018 & Cloud, 420 hosts     & 79  & $\sim$450\,GB & 8\,d & 13.48M\,/\,2.75M & 7 \\
    CIC-UNSW-NB15   & Hybrid (IXIA)        & 49  & 100\,GB & $\sim$2\,d & 0.36M\,/\,0.09M & 9 \\
    TON-IoT         & IoT/IIoT testbed     & 44  & $\sim$6.27\,GB & -- & 22.34M total & 9 \\
    Kitsune         & IoT (camera, router) & 115 & $\sim$1--2\,GB & 9.5\,h & 2.7M total & 9 \\
    CICIoT2023      & IoT, 105 devices     & 46  & 548\,GB & 5\,d & 1.10M\,/\,45.85M & 33 \\
    \tabbot
  \end{tabular*}
  \vspace{-0.05cm}
\end{table}
\setcounter{table}{2}

\noindent\textbf{Baselines.} For detection, we compare IDSForge with two supervised detectors (ET-BERT~\cite{lin2022bert} and TMG-GAN~\cite{ding2024tmggan}) and seven anomaly-based detectors (Whisper~\cite{fu2021realtime}, HyperVision~\cite{fu2023detecting}, Kitsune~\cite{mirsky2018kitsune}, Trident~\cite{zhao2024trident}, UnDiff~\cite{lian2025facing}, Diff-RF~\cite{marteau2021random}, and Contra\-MTD~\cite{han2024contramtd}). For interpretability, we compare it with the local explainers LIME~\cite{ribeiro2016why} and LEMNA~\cite{guo2018lemna} and the global surrogates Trustee~\cite{jacobs2022ai} and SCD-Tree~\cite{zhang2024dissect} (Table~\ref{tab:baselines} in Appendix~\ref{sec:baselines}).

\noindent\textbf{Metrics and implementation.} We report accuracy (ACC), F1 score (F1), and area under the ROC curve (AUC), for which IDSForge scores a flow by its distance $\eta(x)$ to the nearest rule. The ablation study also reports precision (PRE) and recall (REC). For interpretability, we follow~\cite{li2023interpreting,zhang2024dissect} and measure the fidelity (FD) of an explanation model $g$ to the detector $f$ it explains as the fraction of test flows, both benign and malicious, on which their binary decisions agree. FD thus compares $g$ with $f$ rather than with the ground truth, whereas the true positive rate (TPR) and true negative rate (TNR) of $g$ are computed against the ground-truth labels. Experiments run on a machine with an NVIDIA RTX 4090 GPU and 128 GB of RAM (Python 3.10, PyTorch 2.1), and the R*-tree uses libspatialindex (Appendix~\ref{sec:setup}).

\subsection{Detection Performance (RQ1)}
\label{sec:overall}
As anomaly-based methods separate benign from anomalous traffic, we evaluate binary detection. As shown in Table~\ref{tab:BinCsy}, IDSForge achieves the highest F1 score on CIC-IDS2017, CSE-CIC-IDS2018, and CIC-UNSW-NB15 (99.65\%, 99.89\%, and 98.97\%), outperforming the strongest anomaly-based baseline, ContraMTD, by 5.51, 5.07, and 4.80 percentage points, respectively. On TON-IoT, Kitsune, and CICIoT2023, IDSForge achieves the highest AUC and F1 scores, with F1 scores of 99.48\%, 99.15\%, and 98.79\%, outperforming ContraMTD by 6.06, 6.96, and 5.91 percentage points, respectively.
IDSForge also matches supervised detectors, since each hyper-rectangle covers a dense benign region instead of separating labeled classes and thus does not depend on which attacks are labeled. In contrast, Kitsune applies one global threshold to an opaque score, which cannot fit heterogeneous benign regions, and TMG-GAN may overfit the synthetic samples of rare attacks.

\takeaway{Describing heterogeneous benign traffic with many local explicit regions, rather than one global score threshold, lets a detector trained only on benign traffic match supervised ones.}

\subsection{Interpretability Analysis (RQ2)}
\label{sec:interp}
\begin{figure}[!t]
  \centering
  \subfigbottomskip=-4pt
  \subfigcapskip=-5pt
  \subfigure[Flows and learned rules.]{
    \label{fig:pca_a}
    \includegraphics[width=0.47\linewidth]{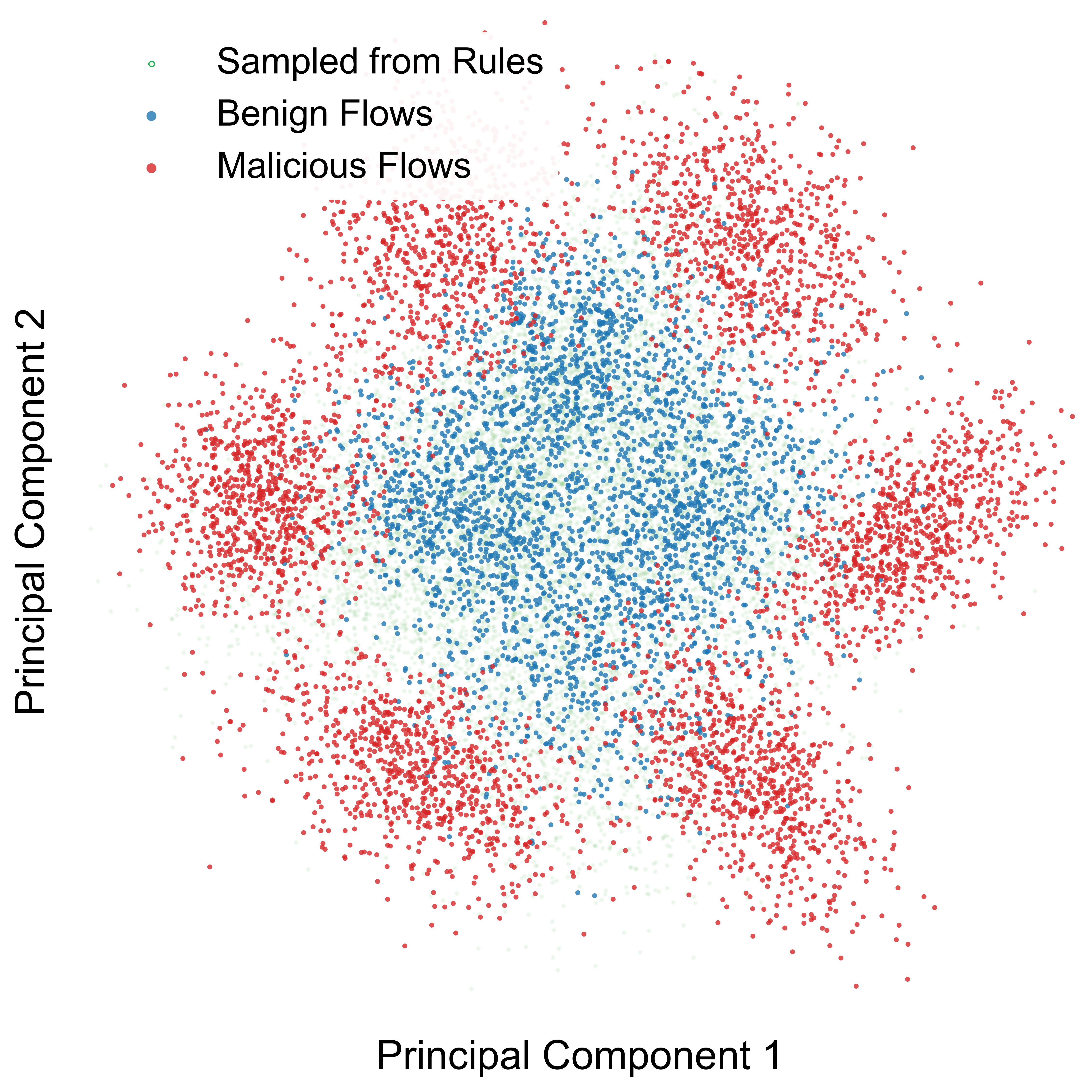}
  }
  \subfigure[Encoder representations (t-SNE).]{
    \label{fig:pca_b}
    \includegraphics[width=0.47\linewidth]{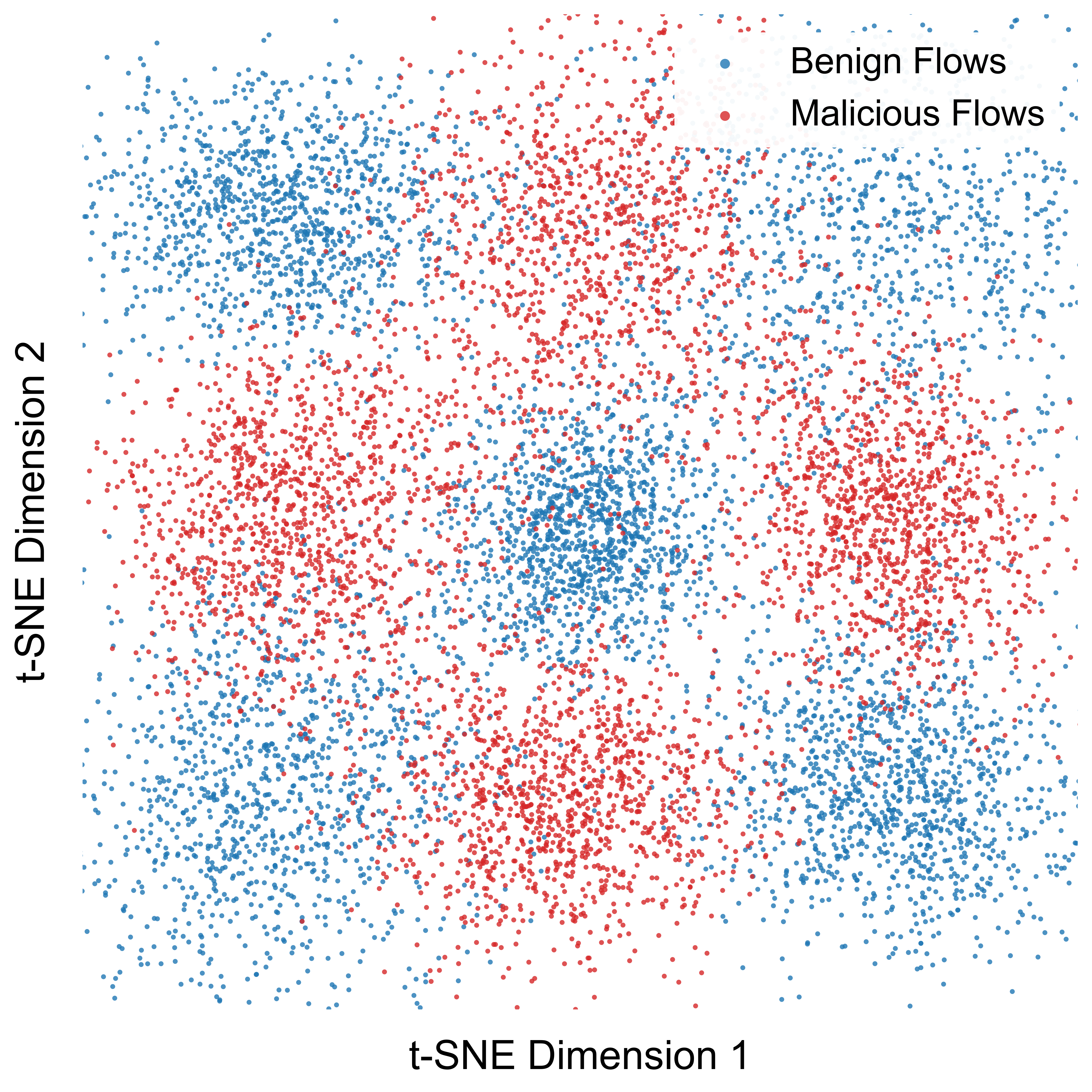}
  }
  \vspace{-0.2cm}
  \caption{Visualization of 20,000 CIC-IDS2017 flows. Blue and red points are benign and malicious flows, and green points in (a) form the rule-accepted region.}
  \label{fig:pca}
  \vspace{-0.4cm}
\end{figure}

\begin{table}[t]
  \centering
  \caption{Comparison with post-hoc explainers (\%) on the same autoencoder, with local explainers (top) aggregated by submodular pick. Orange cells mark values below 60\%, green cells mark full fidelity, and the shaded row is IDSForge.}
  \label{tab:posthoc}
  \tabstyle
  \setlength{\tabcolsep}{2.2pt}
  \begin{tabular}{lccccccccc}
    \tabtop
    \rowcolor{tabHead} & \multicolumn{3}{c}{\textbf{CIC-IDS2017}} & \multicolumn{3}{c}{\textbf{CSE-CIC-IDS2018}} & \multicolumn{3}{c}{\textbf{TON-IoT}} \\
    \cmidrule(lr){2-4} \cmidrule(lr){5-7} \cmidrule(lr){8-10}
    \rowcolor{tabHead} \textbf{Method} & FD & TPR & TNR & FD & TPR & TNR & FD & TPR & TNR \\
    \tabmid
    LIME~\cite{ribeiro2016why} & \weak{15.89} & 78.89 & \weak{0.14} & \weak{15.17} & 76.81 & \weak{2.55} & \weak{13.42} & 79.39 & \weak{0.27} \\
    LEMNA~\cite{guo2018lemna} & \weak{16.94} & 81.24 & \weak{0.86} & \weak{16.06} & 78.92 & \weak{3.18} & \weak{14.21} & 81.68 & \weak{0.92} \\
    \tabmid
    Trustee~\cite{jacobs2022ai} & \weak{50.71} & 99.81 & \weak{38.44} & \weak{59.56} & 98.57 & \weak{51.57} & \weak{46.85} & \weak{45.02} & 99.71 \\
    SCD-Tree~\cite{zhang2024dissect} & 94.75 & 99.37 & 93.60 & 93.01 & 98.92 & 91.80 & 92.83 & 92.50 & 99.15 \\
    \tabmid
    \rowcolor{tabOurs} \textbf{IDSForge} & \best{100.0} & 99.64 & 99.66 & \best{100.0} & 99.88 & 99.90 & \best{100.0} & 99.46 & 99.50 \\
    \tabbot
  \end{tabular}
  \vspace{-0.05cm}
\end{table}

To examine the learned rules, we visualize 20,000 flows from CIC-IDS2017. Figure~\ref{fig:pca_a} projects the flows and points sampled from the rule base onto the first two principal components. The green region accepted by the rules covers almost all benign flows, while most malicious flows lie in the gaps between rules, e.g., the clusters at the top, and fewer than 1\% fall inside it. The fragmented region reflects many instance-specific hyper-rectangular explicit distribution boundaries that match heterogeneous benign traffic. In the latent space, Figure~\ref{fig:pca_b} shows the t-SNE of the encoder representations. Benign flows form compact clusters and most malicious flows lie at the periphery, so EDAE learns the overall benign distribution from benign traffic alone. Since the projection pools the flows of all servers instead of viewing each server separately, some benign and malicious flows still mix near cluster boundaries, e.g., at the top left, which the per-server rules of Figure~\ref{fig:pca_a} separate.

\noindent\textbf{Comparison with post-hoc explainers.}
Following~\cite{li2023interpreting}, the black box is an autoencoder (AE) trained on benign flows that flags a flow when its reconstruction error exceeds a threshold. Each explainer accesses only benign flows and the AE decisions, and is turned into a surrogate $g$ that labels flows by itself. (\rmnum{1}) LIME and LEMNA output feature weights rather than detection results, so we aggregate their local models, a sparse linear model and a mixture regression fitted on AE-labeled perturbations around a flow, into a global surrogate by submodular pick~\cite{ribeiro2016why}. (\rmnum{2}) Trustee and (\rmnum{3}) SCD-Tree distill the AE decisions into a decision tree and Gaussian-bounded interval rules, respectively. We report the FD of $g$ to the AE on test flows with the original class ratio~\cite{li2023interpreting} and the TPR and TNR of $g$, which describe the surrogate rather than the AE. For IDSForge, $g$ is the deployed detector, so its FD attains the maximum by definition, and the meaningful comparison lies in TPR and TNR.

As shown in Table~\ref{tab:posthoc}, the local explainers yield imbalanced rules. LIME and LEMNA reach TPRs of 76.81\%--81.68\% but TNRs of at most 3.18\%, i.e., they reject almost all benign flows, and their fidelity is only 13.42\%--16.94\%. Trustee disagrees with the AE on 40.44\%--53.15\% of flows, likely because one decision tree approximates the nonlinear AE boundary only coarsely. SCD-Tree reaches a fidelity of 92.83\%--94.75\% but still disagrees on 5.25\%--7.17\% of flows, so the rule shown to operators may not reflect the alert logic. In contrast, IDSForge keeps its explanations faithful while achieving a TPR of at least 99.46\% and a TNR of at least 99.50\%. Only Trustee exceeds it on single metrics (a TPR of 99.81\% on CIC-IDS2017 and a TNR of 99.71\% on TON-IoT), at the cost of a TNR of 38.44\% and a TPR of 45.02\%, respectively.

\takeaway{Explanations are guaranteed to match alerts only when the explanation is the detector itself, since even high-fidelity post-hoc surrogates still disagree on some flows.}

\subsection{Continuous Detection (RQ3)}
\label{sec:drift}
We simulate long-term detection by splitting CIC-IDS2017 into its five days, where Monday contains only benign traffic. After each day, IDSForge is updated with that day's benign training flows, and ET-BERT and TMG-GAN are retrained from Tuesday on all labeled flows so far. All models are tested on the test set of \S\ref{sec:overall}, whose malicious flows span the whole week, simulating retrospective detection of future attacks. For fairness, IDSForge is updated at day boundaries rather than by the false-positive trigger.

\begin{figure}[t]
  \centering
  \begin{tikzpicture}
    \begin{groupplot}[
      paperstyle, linelegend,
      group style={group size=2 by 1, horizontal sep=1.0cm},
      width=0.56\linewidth, height=3.5cm,
      xmin=12, xmax=108, ymin=50, ymax=106, ytick={60,80,100},
      xtick={20,40,60,80,100}, xticklabels={Mon,+Tue,+Wed,+Thu,+Fri},
      xlabel={Days used for training},
      every axis plot/.append style={line width=1pt, mark size=2pt},
    ]
      \nextgroupplot[ylabel={F1 (\%)}, legend to name=driftlegend]
        \addplot[cPal, mark=square*] coordinates {(20,97.55) (40,98.06) (60,98.54) (80,99.07) (100,99.65)};
        \addlegendentry{IDSForge}
        \addplot[cThr, dashed, mark=triangle*, mark options={solid, fill=white}] coordinates {(40,78.57) (60,83.75) (80,93.53) (100,98.30)};
        \addlegendentry{TMG-GAN}
        \addplot[cAlt, densely dotted, mark=*, mark options={solid, fill=white}] coordinates {(40,63.29) (60,77.78) (80,87.64) (100,99.63)};
        \addlegendentry{ET-BERT}
        \draw[{Latex[length=1.3mm]}-{Latex[length=1.3mm]}, black!55, line width=0.5pt] (axis cs:44,80.4) -- (axis cs:44,96.3);
        \node[font=\scriptsize, text=black!75, anchor=west, inner sep=1pt] at (axis cs:45,88.4) {+19.5};
      \nextgroupplot[ylabel={AUC (\%)}]
        \addplot[cPal, mark=square*] coordinates {(20,97.56) (40,98.06) (60,98.56) (80,99.08) (100,99.60)};
        \addplot[cThr, dashed, mark=triangle*, mark options={solid, fill=white}] coordinates {(40,78.57) (60,83.62) (80,93.51) (100,98.32)};
        \addplot[cAlt, densely dotted, mark=*, mark options={solid, fill=white}] coordinates {(40,62.70) (60,77.54) (80,87.58) (100,99.87)};
        \draw[{Latex[length=1.3mm]}-{Latex[length=1.3mm]}, black!55, line width=0.5pt] (axis cs:44,80.4) -- (axis cs:44,96.3);
        \node[font=\scriptsize, text=black!75, anchor=west, inner sep=1pt] at (axis cs:45,88.4) {+19.5};
    \end{groupplot}
  \end{tikzpicture}

  \vspace{1pt}
  \pgfplotslegendfromname{driftlegend}
  \vspace{-0.2cm}
  \caption{Continuous detection on CIC-IDS2017. Models are updated daily and tested on the test set of Table~\ref{tab:BinCsy}, and arrows mark the gap after Tuesday.}
  \label{res:inc}
  \vspace{-0.3cm}
\end{figure}

As shown in Figure~\ref{res:inc}, IDSForge reaches an F1 score of 97.55\% after Monday alone and detects attacks from later days, as $\mathcal{L}_3$ captures dense benign regions early. With each added day, the mean normalized rule volume grows from 0.12 to 0.68 and F1 rises monotonically to 99.65\%. In contrast, TMG-GAN and ET-BERT reach 78.57\% and 63.29\% after Tuesday and become comparable only after the full week. They need the complete data distribution and struggle with drift and long-term detection, whereas IDSForge performs well with partial data.

\takeaway{Since the benign traffic toward a server is largely known and stable while future attacks are not, modeling per-server benign behavior detects attacks absent from training, whereas supervised detectors only learn the attacks they have seen.}

\addtocounter{figure}{1}
\begin{figure}[t]
  \centering
  \begin{tikzpicture}
    \begin{axis}[
      paperstyle,
      width=\linewidth, height=3.7cm,
      xmin=-0.8, xmax=10.8, ymin=96, ymax=100.6, ytick={96,97,98,99,100},
      ylabel={Score (\%)},
      xtick={0,1,2,4,5,6,8,9,10},
      xticklabels={ACC,AUC,F1,ACC,AUC,F1,ACC,AUC,F1},
      x tick label style={font=\scriptsize, text=black!65},
      xtick style={draw=none}, clip=false,
      legend style={at={(0.5,1.03)}, anchor=south, legend columns=4, column sep=3pt,
                    /tikz/every even column/.append style={column sep=6pt}},
    ]
      \node[font=\footnotesize, text=black!85, anchor=north, yshift=-9pt] at (axis cs:1,96) {CIC-IDS2017};
      \node[font=\footnotesize, text=black!85, anchor=north, yshift=-9pt] at (axis cs:5,96) {CSE-CIC-IDS2018};
      \node[font=\footnotesize, text=black!85, anchor=north, yshift=-9pt] at (axis cs:9,96) {CIC-UNSW-NB15};
      \draw[cGrid, line width=0.4pt] (axis cs:3,96) -- (axis cs:3,100.6);
      \draw[cGrid, line width=0.4pt] (axis cs:7,96) -- (axis cs:7,100.6);
      \addplot[black!45, dash pattern=on 3pt off 1.5pt, line width=0.8pt, forget plot] coordinates {(-0.5,99.65) (2.5,99.65)};
      \addplot[black!45, dash pattern=on 3pt off 1.5pt, line width=0.8pt, forget plot] coordinates {(3.5,99.89) (6.5,99.89)};
      \addplot[black!45, dash pattern=on 3pt off 1.5pt, line width=0.8pt, forget plot] coordinates {(7.5,98.97) (10.5,98.97)};
      \addplot[only marks, mark=square*, mark size=2.1pt, mark options={fill=cAlt, draw=white, line width=0.4pt}]
        coordinates {(0,97.53) (4,98.59) (8,98.27)};
      \addlegendentry{Online ACC}
      \addplot[only marks, mark=triangle*, mark size=2.7pt, mark options={fill=cThr, draw=white, line width=0.4pt}]
        coordinates {(1,97.54) (5,98.59) (9,98.28)};
      \addlegendentry{Online AUC}
      \addplot[only marks, mark=*, mark size=2.3pt, mark options={fill=cPal, draw=white, line width=0.4pt}]
        coordinates {(2,97.54) (6,98.59) (10,98.28)};
      \addlegendentry{Online F1}
      \addlegendimage{black!45, dash pattern=on 3pt off 1.5pt, line width=0.8pt}
      \addlegendentry{Offline F1}
      \node[font=\scriptsize, text=black!80, anchor=south, inner sep=1pt, yshift=3pt] at (axis cs:0,97.53) {97.53};
      \node[font=\scriptsize, text=black!80, anchor=south, inner sep=1pt, yshift=3pt] at (axis cs:1,97.54) {97.54};
      \node[font=\scriptsize, text=black!80, anchor=south, inner sep=1pt, yshift=3pt] at (axis cs:2,97.54) {97.54};
      \node[font=\scriptsize, text=black!80, anchor=south, inner sep=1pt, yshift=3pt] at (axis cs:4,98.59) {98.59};
      \node[font=\scriptsize, text=black!80, anchor=south, inner sep=1pt, yshift=3pt] at (axis cs:5,98.59) {98.59};
      \node[font=\scriptsize, text=black!80, anchor=south, inner sep=1pt, yshift=3pt] at (axis cs:6,98.59) {98.59};
      \node[font=\scriptsize, text=black!80, anchor=south, inner sep=1pt, yshift=3pt] at (axis cs:8,98.27) {98.27};
      \node[font=\scriptsize, text=black!80, anchor=south, inner sep=1pt, yshift=3pt] at (axis cs:9,98.28) {98.28};
      \node[font=\scriptsize, text=black!80, anchor=south, inner sep=1pt, yshift=3pt] at (axis cs:10,98.28) {98.28};
    \end{axis}
  \end{tikzpicture}
  \vspace{-0.2cm}
  \caption{Online detection performance of IDSForge in Suricata. Markers give the online ACC, AUC, and F1, and dashed lines mark the offline F1 for reference.}
  \label{res:online}
  \vspace{-0.3cm}
\end{figure}
\addtocounter{figure}{-2}

\subsection{Ablation Study (RQ4)}
\label{sec:ablation}
We study the contribution of each module and objective on CIC-IDS2017 using four types of variants. \emph{w/o EDAE} replaces EDAE with a conventional VAE, \emph{w/o Compiler} detects anomalies using the reconstruction error of EDAE instead of rule matching, \emph{w/o Co-evolver} updates only the rules without refining EDAE, and \emph{w/o $\mathcal{L}_k$} removes the $k$-th objective.

\begin{figure}[t]
  \centering
  \begin{tikzpicture}
    \begin{groupplot}[
      paperstyle,
      group style={group size=2 by 1, horizontal sep=0.3cm, y descriptions at=edge left},
      height=3.3cm, ymin=0, ymax=108, ytick={0,25,50,75,100},
      ybar=0.6pt,
      enlarge x limits=false,
      x tick label style={font=\scriptsize, rotate=30, anchor=north east, inner sep=1pt},
      xtick style={draw=none},
      ylabel={Score (\%)},
      every axis plot/.append style={outbar}, barlegend,
    ]
      \nextgroupplot[width=0.575\linewidth, xmin=-0.6, xmax=3.6, /pgf/bar width=4pt, xtick={0,1,2,3}, xticklabels={Full,w/o EDAE,w/o Compiler,w/o Co-evolver}, title={(a) Modules}, legend to name=ablegend]
        \addplot[fill=cPal] coordinates {(0,99.66) (1,64.21) (2,91.24) (3,80.18)};
        \addlegendentry{PRE}
        \addplot[fill=cAlt] coordinates {(0,99.64) (1,66.83) (2,77.89) (3,93.45)};
        \addlegendentry{REC}
        \addplot[fill=cThr, postaction={pattern=north east lines, pattern color=white}] coordinates {(0,99.65) (1,65.49) (2,84.03) (3,86.31)};
        \addlegendentry{F1}
        \draw[black!55, line width=0.5pt, dash pattern=on 2.5pt off 1.5pt] (axis cs:-0.6,99.65) -- (axis cs:3.6,99.65);
      \nextgroupplot[width=0.575\linewidth, xmin=-0.6, xmax=5.6, /pgf/bar width=2.6pt, xtick={0,1,2,3,4,5}, xticklabels={Full,w/o $\mathcal{L}_1$,w/o $\mathcal{L}_2$,w/o $\mathcal{L}_3$,w/o $\mathcal{L}_4$,w/o $\mathcal{L}_5$}, title={(b) Objectives}]
        \addplot[fill=cPal] coordinates {(0,99.66) (1,90.42) (2,92.15) (3,72.43) (4,89.82) (5,86.51)};
        \addplot[fill=cAlt] coordinates {(0,99.64) (1,89.85) (2,90.84) (3,71.86) (4,90.34) (5,94.22)};
        \addplot[fill=cThr, postaction={pattern=north east lines, pattern color=white}] coordinates {(0,99.65) (1,90.13) (2,91.49) (3,72.14) (4,90.08) (5,90.19)};
        \draw[black!55, line width=0.5pt, dash pattern=on 2.5pt off 1.5pt] (axis cs:-0.6,99.65) -- (axis cs:5.6,99.65);
    \end{groupplot}
  \end{tikzpicture}

  \vspace{1pt}
  \pgfplotslegendfromname{ablegend}
  \vspace{-0.2cm}
  \caption{Ablation study on CIC-IDS2017. The dashed line marks the F1 score of the full model.}
  \label{res:abs}
  \vspace{-0.3cm}
\end{figure}
\addtocounter{figure}{1}

As shown in Figure~\ref{res:abs}, among the modules, removing EDAE causes the largest drop (F1 $\downarrow$34.16 points), as the model loses its hyper-rectangular explicit distribution boundaries and falls back to anomaly scores. Removing the compiler reduces recall by 21.75 points, since a global reconstruction threshold cannot separate attacks close to benign regions. Removing model refinement in the co-evolver reduces precision by 19.48 points when trained on Tuesday's benign traffic and tested on later days, as updating the model with the rules suppresses false positives from new applications.

Among the objectives, removing the density loss $\mathcal{L}_3$ causes the largest drop (F1 $\downarrow$27.51 points), as the rules no longer tightly enclose dense benign regions. Without $\mathcal{L}_4$, $\mathcal{L}_1$, and $\mathcal{L}_5$, F1 drops by 9.57, 9.52, and 9.46 points, as the bounds oscillate between collapse and explosion, lose consistency with the input, or take nearly uniform margins across features. Without $\mathcal{L}_2$, the rules become fragmented and overly narrow (F1 $\downarrow$8.16 points). Each objective is thus necessary, as they provide density ($\mathcal{L}_3$), generalization ($\mathcal{L}_2$), coherence ($\mathcal{L}_1$), stability ($\mathcal{L}_4$), and shape ($\mathcal{L}_5$).

\takeaway{Learning explicit benign boundaries hinges on balancing tightness and coverage, as removing the density term loosens the rules most, whereas removing the distribution term fragments them.}

\subsection{Deployment in Popular NIDS Engines (RQ5)}
\label{sec:online}
We integrate IDSForge into Suricata 8.0.1 under 10 Gbps traffic (Appendix~\ref{sec:onlineSetting}), translating the reduced rules automatically with the templates of \S\ref{sec:bre}. The conversion parses each rule into feature-interval pairs, omits inactive features, and encodes the bounds as Lua scripts, which are indexed by the R*-tree before loading.

As shown in Figure~\ref{res:online}, IDSForge keeps online F1 scores above 97.5\%, which are 0.69--2.11 percentage points below the offline results. The gap stems from flow features computed incrementally by the engine. ACC, AUC, and F1 stay close, so the online rules do not trade false positives for missed attacks. Compiling the 15,850 rules of CIC-IDS2017 takes 0.8 seconds, and loading them into Suricata takes 0.07 seconds. At runtime on CIC-IDS2017, IDSForge sustains 7.8 Gbps on a 16-core Intel Xeon Gold 5218 CPU with a median per-packet latency overhead of 3.1 $\mu$s, 0.31\% extra CPU utilization, and 42 MB of memory, less than 3\% of the typical footprint of Suricata. IDSForge can thus run on existing Web-facing gateways without additional hardware.

\takeaway{Compiling a learned detector into native engine rules lets it run on existing NIDS infrastructure at near line rate, and the remaining accuracy gap stems mainly from incremental feature computation in the engine.}

\section{Conclusion and Future Work}
\label{sec:conclusion}
We propose IDSForge, a self-explaining anomaly-based NIDS whose detector and explanation are the same set of executable rules. IDSForge trains an explicit distributional autoencoder to produce instance-specific hyper-rectangular explicit distribution boundaries of benign traffic rather than anomaly scores, compiles them into a compact rule base that popular NIDS engines execute natively, and regenerates the model and rules together when operator-confirmed false positives exceed a threshold. Experiments show that IDSForge reduces the F1 error of state-of-the-art anomaly-based detectors by $5.7\times$ to $47\times$, keeps its explanations consistent with detection, and sustains 7.8 Gbps in Suricata with a median per-packet latency overhead of 3.1 $\mu$s. Since each rule is a plain-text conjunction, operators can audit, edit, or revoke individual rules without retraining. These results suggest that using neural models to generate executable rules is a practical path toward interpretable learning-based detection for Web security.

Future work has four directions. First, we will study adaptive adversaries that probe or mimic the learned rules, building on the certified decision margin of Proposition~\ref{prop:margin}. Second, we will combine flow-level rules with payload inspection to detect exploits hidden in normal-looking sessions. Third, we will learn feature transformations that cover correlated features with fewer rules. We will also harden the co-evolver against manipulated false-positive feedback, which currently relies on trusted operator triage. Finally, we will evaluate IDSForge on large-scale production Web traffic and offload its rules to programmable switches~\cite{shi2025helios,li2024genos}.

\bibliographystyle{ACM-Reference-Format}
\bibliography{reference}

\appendix
\setcounter{dbltopnumber}{3}
\renewcommand{\dbltopfraction}{0.9}
\section{Theoretical Analysis}
\label{sec:theory}
This appendix provides theoretical support for the three modules of IDSForge (\S\ref{sec:design}). Table~\ref{tab:theory} summarizes the four results, each backing a design claim in the main text.

\begin{table}[t]
  \centering
  \caption{Overview of the theoretical results.}
  \label{tab:theory}
  \tabstyle
  \begin{tabularx}{\linewidth}{@{\hspace{4pt}}l X l@{\hspace{4pt}}}
    \tabtop
    \rowcolor{tabHead} \textbf{Result} & \textbf{Guarantee} & \textbf{Module} \\
    \tabmid
    Prop.~\ref{prop:density}   & Rules tightly enclose benign flows, and the smooth count is accurate & EDAE \\
    Prop.~\ref{prop:reduction} & Rule reduction preserves the benign region & Compiler \\
    Prop.~\ref{prop:margin}    & Explanations are faithful and certify a decision margin & Compiler \\
    Cor.~\ref{prop:forgetting} & Updates never reject previously accepted flows & Co-evolver \\
    \tabbot
  \end{tabularx}
  \vspace{-0.3cm}
\end{table}

\noindent\textbf{Notation.} Flow features are min-max normalized to $\mathcal{X}=[0,1]^n$, and test inputs are clipped to $\mathcal{X}$. A rule is a closed hyper-rectangle $H=\prod_{j=1}^{n}[l_j,u_j]$, where an inactive feature $j\notin\mathcal{A}_H$ takes $[l_j,u_j]=[0,1]$. For a finite rule base $\mathcal{H}$, the acceptance region and the detector are
\begin{equation*}
R(\mathcal{H})=\bigcup_{H\in\mathcal{H}}H,\qquad f_{\mathcal{H}}(x)=\mathbb{1}\big[x\notin R(\mathcal{H})\big].
\end{equation*}
For a rule $H$ and a flow $x$, the violation set and the violation distance are
\begin{gather*}
V(x,H)=\{j\in\mathcal{A}_H\mid x_j\notin[l_j,u_j]\},\\
d_\infty(x,H)=\max_{j}\,\mathrm{dist}\big(x_j,[l_j,u_j]\big).
\end{gather*}
We write $\mathrm{vol}(\cdot)$ for the Lebesgue measure on $\mathcal{X}$.

\subsection{Explicit Distributional Autoencoder}
\label{sec:theory-bae}
The density loss $\mathcal{L}_3$ is meant to make each rule enclose its benign flows tightly. Proposition~\ref{prop:density} shows that its minimizer is exactly such a box, and that replacing the non-differentiable count by the smooth count of \S\ref{sec:bae} introduces only a small error.

\begin{proposition}[Tightness of the density objective]
\label{prop:density}
\leavevmode\par\noindent\textup{(a)} Fix a mini-batch $\mathcal{B}$ and a rule $H$. Every minimizer of $\mathrm{vol}(H)/(N_{\mathrm{in}}(H)+\varepsilon)$ is the bounding box of the samples it contains.
\par\noindent\textup{(b)} Let $k=|\mathcal{A}_H|$, and let $\tilde N_{\mathrm{in}}$ be the smooth count with temperature $\tau$. If every $x'\in\mathcal{B}$ lies at $\ell_\infty$-distance at least $\Delta>0$ from the boundary of $H$, then
\begin{equation}
|\tilde N_{\mathrm{in}}-N_{\mathrm{in}}|\le 2k|\mathcal{B}|\,e^{-\Delta/\tau}.
\end{equation}
\end{proposition}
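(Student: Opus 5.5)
For (a), I will argue by contradiction through a shrinking step. Let $H$ be a minimizer and let $S=\mathcal{B}\cap H$ be the samples it contains, with $N_{\mathrm{in}}(H)=|S|$.

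\emph{Case $S\neq\emptyset$.} Let $B(S)$ be the bounding box of $S$, taken coordinatewise on the active features $\mathcal{A}_H$; inactive features stay at $[0,1]$, matching the notation where the volume is computed over $\mathcal{A}_H$. Since $H$ is a box containing $S$, we have $B(S)\subseteq H$. The count is unchanged, $N_{\mathrm{in}}(B(S))=|S|$, because every point of $S$ lies in $B(S)$ and $B(S)\subseteq H$ admits no new points.

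Hence $\mathrm{vol}(B(S))\le\mathrm{vol}(H)$, with strict inequality whenever $H\neq B(S)$. Indeed, some active coordinate then has $l_j<\min_{x\in S}x_j$ or $u_j>\max_{x\in S}x_j$. Shrinking that coordinate strictly reduces that side length, which strictly reduces the product whenever the other side lengths are positive. This strictly lowers the objective $\mathrm{vol}/(N+\varepsilon)$, contradicting minimality, so $H=B(S)$.

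Two technicalities need care.
\begin{itemize}
\item \emph{Degenerate boxes.} If some other side of $H$ already has zero width, the volume is $0$ and the objective is $0$ for both boxes. I will handle this either by restricting to boxes of positive volume, or by phrasing the claim as: among minimizers with a given contained set, the bounding box is the one selected. In the writeup I would state (a) for nondegenerate boxes, or interpret ``minimizer'' as ``no strict shrink reduces the objective while keeping the contained set.''
\item \emph{Case $S=\emptyset$.} The objective is $\mathrm{vol}(H)/\varepsilon$, minimized only by degenerate boxes. I will note that the bounding box of the empty set is empty, or set this case aside as vacuous.
\end{itemize}
This degeneracy bookkeeping is the main obstacle for (a). The shrinking argument itself is routine.

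For (b), I will bound each sample's contribution separately and then sum over $\mathcal{B}$. Write the smooth indicator for $x'$ as the product of $2k$ factors $\sigma(a_i/\tau)$, where each $a_i$ is one of $x'_j-l_j$ or $u_j-x'_j$.

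\emph{Inside points.} If $x'\in H$ at $\ell_\infty$-distance at least $\Delta$ from $\partial H$, every $a_i\ge\Delta$. Hence each factor is at least $1-e^{-\Delta/\tau}$, using $1-\sigma(t)=\sigma(-t)\le e^{-t}$. By the elementary inequality $1-\prod(1-\delta_i)\le\sum\delta_i$, the product is within $2ke^{-\Delta/\tau}$ of $1$.

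\emph{Outside points.} If $x'\notin H$, the $\ell_\infty$ distance to $H$ is at least $\Delta$, so some active coordinate violates a bound by at least $\Delta$, i.e., some $a_i\le-\Delta$. That single factor is at most $\sigma(-\Delta/\tau)\le e^{-\Delta/\tau}$, and the remaining factors are at most $1$. So the product lies within $e^{-\Delta/\tau}\le 2ke^{-\Delta/\tau}$ of $0$.

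\emph{Summing.} The triangle inequality over the $|\mathcal{B}|$ samples gives $|\tilde N_{\mathrm{in}}-N_{\mathrm{in}}|\le 2k|\mathcal{B}|e^{-\Delta/\tau}$.

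One detail needs a line: for an outside point, distance at least $\Delta$ from the boundary must be turned into a violation of at least $\Delta$ in some active coordinate. This holds because inactive coordinates span $[0,1]$ and cannot be violated by points of $\mathcal{X}$.
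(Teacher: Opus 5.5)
Your proof is correct and follows the paper's argument in both parts. For (a) the paper likewise replaces $H$ by the bounding box of $H\cap\mathcal{B}$, which keeps $N_{\mathrm{in}}$ and does not increase the volume. For (b) it uses $\sigma(-t)\le e^{-t}$, Bernoulli's inequality on the $2k$ factors of an inside sample, a single factor of at most $e^{-\Delta/\tau}$ for an outside sample, and a sum over $\mathcal{B}$.

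Your strict-decrease step and degeneracy caveats in (a) go beyond the paper, whose argument only shows the bounding box is no worse, so that some minimizer is a bounding box, not that every one is. Of your two repairs, prefer the second. Among positive-volume boxes the objective is positive yet can be driven to $0$ by shrinking a box around one sample, so no minimizer exists there and the restricted claim is vacuous; $B(S)$ may also itself fall outside that class. What your argument actually proves is that any $H$ with $\mathrm{vol}(H)>0$, $H\cap\mathcal{B}\neq\emptyset$ and $H\neq B(H\cap\mathcal{B})$ has strictly larger objective than $B(H\cap\mathcal{B})$.
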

\begin{proof}
(a) The bounding box of $H\cap\mathcal{B}$ lies inside $H$ and contains the same samples, so it keeps the count without increasing the volume.
\par\noindent(b) We use $\sigma(-t)\le e^{-t}$ for $t\ge0$. A sample inside $H$ with margin $\Delta$ has $2k$ sigmoid factors of at least $1-e^{-\Delta/\tau}$, so by Bernoulli's inequality its product lies in $[1-2ke^{-\Delta/\tau},1]$. A sample outside $H$ violates some bound by at least $\Delta$, so its product is at most $e^{-\Delta/\tau}$. Summing over $\mathcal{B}$ gives the bound.
\end{proof}

\noindent\textit{Remark.} With the default $\tau=0.05$, a margin of $\Delta=0.3$ already gives $e^{-\Delta/\tau}<0.003$, so the smooth count closely tracks the exact count used at deployment.

\subsection{High-Speed Rule Compiler and Explanation}
\label{sec:theory-bre}
The compiler must shrink the rule base without changing what the detector accepts. Proposition~\ref{prop:reduction} guarantees this for the Boolean minimization of \S\ref{sec:bre}.

\begin{proposition}[Lossless rule reduction]
\label{prop:reduction}
Let $\mathcal{H}_s$ be the rules of a server, and let $\mathcal{H}'_s$ be the hyper-rectangles decoded from the sum of products returned by Boolean minimization over the threshold variables of \S\ref{sec:bre}. Then $R(\mathcal{H}'_s)=R(\mathcal{H}_s)$ up to a set of measure zero on cell boundaries.
\end{proposition}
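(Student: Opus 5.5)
The argument is a discretization followed by semantic equivalence of Boolean functions. Fix a server $s$. For each feature $j$, let $c_{j,1}<\dots<c_{j,m_j}$ be the sorted cut points, i.e., all lower and upper bounds of rules in $\mathcal{H}_s$ on feature $j$, together with $0$ and $1$. These cut points partition $\mathcal{X}$ into finitely many open grid cells $C=\prod_j(c_{j,a_j},c_{j,a_j+1})$. What remains is the union of the cell faces, a finite union of hyperplane pieces of Lebesgue measure zero.

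The first step is constancy on cells. On the interior of any cell, every threshold variable $b_{j,\kappa}=\mathbb{1}[x_j\ge c_{j,\kappa}]$ is constant. The resulting assignment is order-consistent: $b_{j,\kappa}=1$ for $\kappa\le a_j$ and $0$ otherwise. Conversely, every order-consistent assignment arises from exactly one cell. The order-violating assignments are exactly the don't-care set $\mathcal{C}$, and no point of $\mathcal{X}$ realizes them.

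The second step is to verify the encoding. By the equivalence $l\le x_j<u\iff b_{j,\kappa_j(l)}\wedge\neg b_{j,\kappa_j(u)}$, a cell interior lies in a rule $H$ exactly when the product term of $H$ evaluates to $1$ on the cell's assignment. The only discrepancy is the closed versus half-open upper bound, and it lives on the face $x_j=u$, which has measure zero. Hence $R(\mathcal{H}_s)$ agrees with $\bigcup\{C: F_s(\beta_C)=1\}$ up to a null set, where $\beta_C$ denotes the assignment of cell $C$.

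The third step uses the minimization guarantee $F_s\le F'\le F_s\vee\mathcal{C}$. I take this as the correctness invariant of Espresso's \textsc{Expand}/\textsc{Irredundant}/\textsc{Reduce}, each of which preserves a cover of the ON-set and never intersects the OFF-set. Restricted to realizable (order-consistent) assignments, $\mathcal{C}$ is false, so $F'(\beta_C)=F_s(\beta_C)$ for every cell. Therefore the union of cells where $F'$ holds equals the union where $F_s$ holds.

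The fourth step, decoding, is the main obstacle. I must show that each product term of $F'$ corresponds to a hyper-rectangle $H'$ whose interior cells are exactly those on which the term is true, so that $R(\mathcal{H}'_s)$ matches this cell union up to boundaries. For a single feature, a conjunction of literals $b_{j,\kappa}$ and $\neg b_{j,\kappa'}$ over monotone thresholds is, on realizable points, equivalent to $x_j\ge c_{j,\max\kappa}$ and $x_j<c_{j,\min\kappa'}$. That is an interval, possibly empty, in which case the term is satisfied by no cell and is dropped. If no literal mentions $j$, the interval is the full range $[0,1]$. Taking the product over features gives a box, and closing it changes only boundary faces.

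Since terms that are true only on don't-care assignments decode to empty boxes, they contribute nothing. Collecting the three exceptional sets gives the result: the half-open versus closed faces, the cell faces, and the empty boxes are all finite unions of measure-zero sets. Hence $R(\mathcal{H}'_s)\,\triangle\,R(\mathcal{H}_s)$ is contained in a null set, which proves Proposition~\ref{prop:reduction}.
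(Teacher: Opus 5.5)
Your proposal is correct and follows essentially the same route as the paper. The paper likewise splits $\mathcal{X}$ into cells by the cut points, notes that each cell realizes an order-respecting assignment outside the don't-care set so that $F'$ agrees with $F_s$ there, and decodes each product term of $F'$ as a contiguous range of cells per feature, i.e., a hyper-rectangle. You are more explicit about the null sets (half-open versus closed bounds, cell faces, empty terms), and you only need the direction ``cell $\Rightarrow$ order-respecting assignment.'' Your converse claim, that every order-consistent assignment arises from exactly one cell, can fail for the all-zero or all-one assignments on a feature when $0$ or $1$ is a cut point, but the argument never uses it.
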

\begin{proof}
Since the cut points contain all rule bounds, they split $\mathcal{X}$ into cells, and each rule is a union of cells. Each cell corresponds to one order-respecting assignment of the threshold variables, on which $F$ equals $1$ if and only if the cell lies in $R(\mathcal{H}_s)$. As only order-violating assignments are don't-cares, the minimized $F'$ agrees with $F$ on every cell. A product term of $F'$ constrains each feature to a contiguous range of cells, so it decodes to the hyper-rectangle formed by the cells that satisfy it. Hence $R(\mathcal{H}'_s)$ is the union of the cells with $F'=1$, which is $R(\mathcal{H}_s)$.
\end{proof}

\noindent\textbf{Complexity.} Rule generation needs one forward pass per flow. For $K$ rules of a server, a point query on an R*-tree with fanout $M$ visits $O(\log_M K)$ nodes under low overlap and $O(K)$ nodes in the worst case. Minimization reduces $K$, and the R*-tree reduces overlap.

\smallskip
\noindent\textbf{Faithfulness.} Since inactive features always hold on $\mathcal{X}$, a flow satisfies a rule $H$ exactly when $V(x,H)=\emptyset$. Hence $f_{\mathcal{H}}(x)=1$ if and only if every rule is violated, and the explanation reported by IDSForge, i.e., the matched rule of a benign flow or the violated conditions of the nearest rule of a malicious flow, is a necessary and sufficient certificate of the decision. In contrast, a post-hoc surrogate generally disagrees with the detector on some flows~\cite{li2023interpreting,han2024rules}.

Beyond faithfulness, the explanation also quantifies robustness. Proposition~\ref{prop:margin} shows that the distance to the nearest rule is the exact perturbation budget needed to change a decision. We define the distance of a flow to the rule base and its margin inside an accepting rule as
\begin{equation}
\eta(x)=\min_{H\in\mathcal{H}}d_\infty(x,H),\qquad
\gamma(x)=\max_{H\ni x}\,\min_{j\in\mathcal{A}_H}\min(x_j-l_j,\,u_j-x_j).
\end{equation}

\begin{proposition}[Certified decision margin]
\label{prop:margin}
\leavevmode\par\noindent\textup{(a)} If $f_{\mathcal{H}}(x)=1$, then $\eta(x)>0$, and $f_{\mathcal{H}}(x+\delta)=1$ for every $\delta$ with $\|\delta\|_\infty<\eta(x)$. This bound is tight, i.e., some $\delta$ with $\|\delta\|_\infty=\eta(x)$, supported only on $V(x,H^*)$ for the nearest rule $H^*$, yields $f_{\mathcal{H}}(x+\delta)=0$.
\par\noindent\textup{(b)} If $f_{\mathcal{H}}(x)=0$ and $\gamma(x)>0$, then $f_{\mathcal{H}}(x+\delta)=0$ for every $\delta$ with $\|\delta\|_\infty<\gamma(x)$ and $x+\delta\in\mathcal{X}$.
\end{proposition}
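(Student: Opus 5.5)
Both parts follow from one elementary fact about $\ell_\infty$ distance to a box: for a closed hyper-rectangle $H$, the map $x\mapsto d_\infty(x,H)$ is $1$-Lipschitz with respect to $\|\cdot\|_\infty$. Indeed, for each $j$, $\mathrm{dist}(x_j+\delta_j,[l_j,u_j])\le \mathrm{dist}(x_j,[l_j,u_j])+|\delta_j|$, and taking the maximum over $j$ gives the claim. Taking the minimum over the finite rule base shows that $\eta$ is also $1$-Lipschitz. In addition, $x\in H$ holds if and only if $d_\infty(x,H)=0$, since $H$ is closed, so $x\in R(\mathcal{H})$ if and only if $\eta(x)=0$. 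Inactive coordinates never contribute, because their interval is $[0,1]$ and inputs lie in $\mathcal{X}$.

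\textbf{Part (a), certification.} If $f_{\mathcal{H}}(x)=1$, then $x\notin R(\mathcal{H})$. Since $\mathcal{H}$ is finite and each rule is closed, $\eta(x)>0$. For $\|\delta\|_\infty<\eta(x)$, the Lipschitz property gives
\[
\eta(x+\delta)\ge \eta(x)-\|\delta\|_\infty>0,
\]
so $x+\delta\notin R(\mathcal{H})$ and the flow is still flagged. If $x+\delta$ leaves $\mathcal{X}$, clipping is a coordinatewise projection onto $[0,1]$ and can only decrease each $|\delta_j|$ relative to $x$, so the same bound applies.

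\textbf{Part (a), tightness.} Let $H^*$ attain $\eta(x)$. For each $j\in V(x,H^*)$, let $\delta_j$ be the signed displacement moving $x_j$ to the nearest endpoint of $[l_j,u_j]$, and set $\delta_j=0$ for all other $j$. Then $x+\delta\in H^*$, so $f_{\mathcal{H}}(x+\delta)=0$. Moreover, $\|\delta\|_\infty=\max_j \mathrm{dist}(x_j,[l_j,u_j])=d_\infty(x,H^*)=\eta(x)$, and $\delta$ is supported only on $V(x,H^*)$. Since $H^*\subseteq\mathcal{X}$, the point $x+\delta$ stays in $\mathcal{X}$ and clipping has no effect.

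\textbf{Part (b).} Pick $H\ni x$ attaining the maximum in $\gamma(x)$. This is possible because the maximum is over finitely many rules, and the set is nonempty since $f_{\mathcal{H}}(x)=0$. For active $j\in\mathcal{A}_H$, both $x_j-l_j\ge\gamma(x)$ and $u_j-x_j\ge\gamma(x)$ hold, so $|\delta_j|<\gamma(x)$ keeps $x_j+\delta_j$ strictly inside $[l_j,u_j]$. For inactive $j$, the constraint is $[0,1]$, and the hypothesis $x+\delta\in\mathcal{X}$ ensures it holds. Hence $x+\delta\in H\subseteq R(\mathcal{H})$, and the flow is still accepted.

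The main obstacle I expect is bookkeeping rather than depth. I need to handle inactive features in both parts and state the domain/clipping convention precisely, because without the assumption $x+\delta\in\mathcal{X}$ in (b), an inactive coordinate could leave $[0,1]$. I also need to use closedness and finiteness of $\mathcal{H}$ to obtain strict positivity of $\eta(x)$ and the existence of the minimizer $H^*$.
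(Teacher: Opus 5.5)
Your proposal is correct and follows the paper's proof essentially step for step. Part (a) rests on the $1$-Lipschitz property of $d_\infty(\cdot,H)$ together with closedness and finiteness, tightness comes from projecting the violated features onto the intervals of $H^*$, and part (b) uses the rule attaining $\gamma(x)$, with strict interior margins on active features and $x+\delta\in\mathcal{X}$ for inactive ones. Your clipping remark is harmless but unnecessary, since $R(\mathcal{H})\subseteq\mathcal{X}$ means any $x+\delta\notin\mathcal{X}$ is flagged anyway.
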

\begin{proof}
(a) As each $H$ is closed and $\mathcal{H}$ is finite, $\eta(x)>0$. Since $d_\infty(\cdot,H)$ is $1$-Lipschitz in $\|\cdot\|_\infty$, we have $d_\infty(x+\delta,H)\ge d_\infty(x,H)-\|\delta\|_\infty>0$ for each $H$. For tightness, project each violated feature $j\in V(x,H^*)$ onto $[l^*_j,u^*_j]$ and keep the other features unchanged, which moves $x$ into $H^*$ with $\|\delta\|_\infty=\eta(x)$.
\par\noindent(b) For the rule $H$ attaining $\gamma(x)$, every active feature satisfies $l_j<x_j+\delta_j<u_j$, and inactive features stay in $[0,1]$, so $x+\delta\in H$.
\end{proof}

\noindent\textit{Remark.} Benign flows with margin $\gamma$ are thus robust to noise smaller than $\gamma$ (Figure~\ref{fig:intro1_b}), and an attacker must change the violated features of a malicious flow by at least $\eta(x)$ to evade detection. Since the explanation names these features and the required changes, operators can directly assess the evasion risk of an alert.

\subsection{Self-Distillation Co-Evolver}
\label{sec:theory-blu}
An update should extend the benign region to new behaviors without dropping what was learned before. This follows directly from the lossless reduction.

\begin{corollary}[No rule-level forgetting]
\label{prop:forgetting}
Let $\mathcal{H}_{t+1}$ be obtained from $\mathcal{H}_t$ by merging the regenerated rules $\mathcal{H}_{\mathrm{new}}$ and applying the reduction of the compiler. Then $R(\mathcal{H}_t)\subseteq R(\mathcal{H}_{t+1})$, i.e., flows accepted before the update remain accepted after it, and the region grows by at most $\sum_{H\in\mathcal{H}_{\mathrm{new}}}\mathrm{vol}(H)$.
\end{corollary}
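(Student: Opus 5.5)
The plan is to reduce the corollary to Proposition~\ref{prop:reduction}, applied to the merged rule base $\mathcal{H}_t\cup\mathcal{H}_{\mathrm{new}}$. By definition of the acceptance region, $R(\mathcal{H}_t\cup\mathcal{H}_{\mathrm{new}})=R(\mathcal{H}_t)\cup R(\mathcal{H}_{\mathrm{new}})$. The compiler then recomputes the cut points from all bounds of the merged set and runs Boolean minimization. Proposition~\ref{prop:reduction} therefore gives $R(\mathcal{H}_{t+1})=R(\mathcal{H}_t)\cup R(\mathcal{H}_{\mathrm{new}})$ up to a measure-zero set on cell boundaries. Both claims then follow by elementary set and measure arguments.

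For the inclusion $R(\mathcal{H}_t)\subseteq R(\mathcal{H}_{t+1})$, the measure-zero caveat has to be removed, because the statement says every accepted flow stays accepted, not almost every one. I would argue at the level of closed cells rather than up to measure zero.
\begin{itemize}
\item The cut points include all bounds of $\mathcal{H}_t$. So each $H\in\mathcal{H}_t$, a closed box, is exactly the union of the closed cells it contains.
\item By the proof of Proposition~\ref{prop:reduction}, $F'$ agrees with $F$ on every order-respecting assignment. Hence each such cell has $F'=1$.
\item Each product term of $F'$ decodes to the closed hyper-rectangle spanned by its cells, so every such closed cell lies in some decoded box.
\item Therefore every point of $H$, including boundary points, lies in $R(\mathcal{H}_{t+1})$.
\end{itemize}
Making this boundary bookkeeping precise is the main obstacle. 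The encoding $l\le x_j<u$ is half-open, so I would first fix the convention that decoded rules are the closures of the union of their cells, which matches the closed rules in the Notation paragraph. Under that convention, any point of a closed cell with $F=1$ is accepted by $F'$.

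For the growth bound, the same equality gives
\[
R(\mathcal{H}_{t+1})\setminus R(\mathcal{H}_t)\subseteq R(\mathcal{H}_{\mathrm{new}})\setminus R(\mathcal{H}_t)
\]
up to a null set, since the decoded boxes are unions of cells of the merged union and add nothing outside it. Monotonicity and countable subadditivity of $\mathrm{vol}$ then give
\[
\mathrm{vol}\bigl(R(\mathcal{H}_{t+1})\bigr)-\mathrm{vol}\bigl(R(\mathcal{H}_t)\bigr)\le \mathrm{vol}\Bigl(\bigcup_{H\in\mathcal{H}_{\mathrm{new}}}H\Bigr)\le\sum_{H\in\mathcal{H}_{\mathrm{new}}}\mathrm{vol}(H).
\]
The null set does not affect the measure, so no extra care is needed for this part.
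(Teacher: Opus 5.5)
Your proposal is correct and follows the paper's own argument: apply Proposition~\ref{prop:reduction} to the merged base to get $R(\mathcal{H}_{t+1})=R(\mathcal{H}_t)\cup\bigcup_{H\in\mathcal{H}_{\mathrm{new}}}H$, from which the inclusion follows directly and the volume bound follows by subadditivity. Your closed-cell bookkeeping to remove the measure-zero caveat is extra care that the paper's one-line proof skips, since the paper simply asserts the equality exactly; note that this step assumes each box of $\mathcal{H}_t$ has positive width in every feature, which holds when $\mathcal{H}_t$ is itself a reduced rule base.
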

\begin{proof}
Merging only adds rules, and the reduction preserves their union by Proposition~\ref{prop:reduction}, so $R(\mathcal{H}_{t+1})=R(\mathcal{H}_t)\cup\bigcup_{H\in\mathcal{H}_{\mathrm{new}}}H$.
\end{proof}

\noindent\textit{Remark.} The region growth, which may admit attacks, is bounded by the volume of the new rules and occurs only when the false-positive rate exceeds $\theta_{\mathrm{FP}}$.

\section{Rule Compilation Algorithm}
\label{sec:alg}
Algorithm~\ref{alg:bre} summarizes the per-server compilation of \S\ref{sec:bre}.

\begin{algorithm}[t]
  \caption{Per-server rule compilation}
  \label{alg:bre}
  \small
  \SetKwProg{Fn}{Function}{}{}
  \SetKwFunction{Esp}{Espresso}\SetKwFunction{Str}{STR}
  \KwIn{$\mathcal{D}=\bigcup_{s}\mathcal{D}_s$, EDAE $g_\theta$, engine $E$, capacity $M$}
  \KwOut{$\{(\mathcal{H}'_s,\mathcal{T}_s,\Pi_s)\}_s$}
  \ForEach{server $s$}{
    $\mathcal{H}_s\leftarrow\big\{\prod_{j\in\mathcal{A}(x)}[l_j(x),u_j(x))\,\big|\,x\in\mathcal{D}_s\big\}$ \tcp*{hyper-rectangles}
    $F_s\leftarrow\bigvee_{H\in\mathcal{H}_s}\bigwedge_{j\in\mathcal{A}_H}\big(b_{j,\kappa_j(l_{Hj})}\wedge\neg b_{j,\kappa_j(u_{Hj})}\big)$ \tcp*{encode}
    $\mathcal{H}'_s\leftarrow\{\mathrm{dec}(p)\mid p\in\Esp(F_s,\mathcal{C})\}$ \tcp*{reduce}
    $\mathcal{T}_s\leftarrow\Str(\mathcal{H}'_s,M)$ \tcp*{index}
    $\Pi_s\leftarrow\{\tau_E(H,s)\}_{H\in\mathcal{H}'_s}\cup\{\alpha_E(s)\}$ \tcp*{translate}
  }
  \Fn{\Esp{$F,\mathcal{C}$}}{
    $\mathcal{R}\leftarrow\neg(F\vee\mathcal{C})$;\ $F'\leftarrow\textsc{Irr}(\textsc{Exp}(F))$ \tcp*{OFF-set, primes}
    \lRepeat{$|F'|$ \emph{is unchanged}}{$F'\leftarrow\textsc{Irr}(\textsc{Exp}(\textsc{Red}(F')))$}
    \Return $F'$ \tcp*{shrink, grow, drop}
  }
  \Fn{\Str{$\mathcal{H},M$}}{
    $P\leftarrow\lceil|\mathcal{H}|/M\rceil$, $S\leftarrow\lceil P^{1/n}\rceil$; tile centers into $S$ slabs\;
    Pack leaves of size $M$, then $\{B(N)\}$ level by level; \Return root\;
  }
\end{algorithm}

\begin{figure*}[t]
  \centering
  \begin{tikzpicture}
    \begin{groupplot}[
      paperstyle, linelegend,
      group style={group size=5 by 1, horizontal sep=0.35cm, y descriptions at=edge left},
      width=0.245\textwidth, height=2.9cm,
      ymin=92, ymax=100.6, ytick={92,94,96,98,100},
      scaled x ticks=false, x tick label style={font=\scriptsize, /pgf/number format/fixed, /pgf/number format/precision=2},
      every axis plot/.append style={line width=1pt, mark size=1.7pt},
    ]
      \nextgroupplot[xlabel={$w_1$ (reconstruction)}, xmin=0.03, xmax=0.32, xtick={0.05,0.15,0.25}, legend to name=senslegend, ylabel={Score (\%)}]
        \fill[cThr!14] (axis cs:0.15,92) rectangle (axis cs:0.25,100.6);
        \draw[black!55, dash pattern=on 2pt off 1.5pt, line width=0.5pt] (axis cs:0.20,92) -- (axis cs:0.20,100.6);
        \addplot[cPal, mark=square*] coordinates {(0.05,96.11) (0.1,98.2) (0.15,99.1) (0.2,99.65) (0.25,99.6) (0.3,97.54)};
        \addlegendentry{F1}
        \addplot[cAlt, dashed, mark=triangle*, mark options={solid, fill=white}] coordinates {(0.05,97) (0.1,98.5) (0.15,99) (0.2,99.64) (0.25,99.3) (0.3,97.5)};
        \addlegendentry{REC}
        \addlegendimage{area legend, fill=cThr!14, draw=none}\addlegendentry{Stable region}
        \addlegendimage{black!55, dash pattern=on 2pt off 1.5pt}\addlegendentry{Default}
      \nextgroupplot[xlabel={$w_2$ (distribution)}, xmin=0.06, xmax=0.37, xtick={0.1,0.2,0.3}]
        \fill[cThr!14] (axis cs:0.15,92) rectangle (axis cs:0.30,100.6);
        \draw[black!55, dash pattern=on 2pt off 1.5pt, line width=0.5pt] (axis cs:0.243,92) -- (axis cs:0.243,100.6);
        \addplot[cPal, mark=square*] coordinates {(0.08,93) (0.12,95.5) (0.15,99.3) (0.2,99.5) (0.243,99.65) (0.3,99.4) (0.35,98.9)};
        \addplot[cAlt, dashed, mark=triangle*, mark options={solid, fill=white}] coordinates {(0.08,98) (0.12,98.2) (0.15,99.1) (0.2,99.3) (0.243,99.64) (0.3,99.2) (0.35,98.7)};
      \nextgroupplot[xlabel={$w_3$ (density)}, xmin=0.08, xmax=0.47, xtick={0.1,0.2,0.3,0.4}]
        \fill[cThr!14] (axis cs:0.22,92) rectangle (axis cs:0.40,100.6);
        \draw[black!55, dash pattern=on 2pt off 1.5pt, line width=0.5pt] (axis cs:0.378,92) -- (axis cs:0.378,100.6);
        \addplot[cPal, mark=square*] coordinates {(0.1,95) (0.15,96.46) (0.22,98) (0.3,99.1) (0.378,99.65) (0.4,99.5) (0.45,94.6)};
        \addplot[cAlt, dashed, mark=triangle*, mark options={solid, fill=white}] coordinates {(0.1,96.5) (0.15,97) (0.22,98.2) (0.3,99) (0.378,99.64) (0.4,98) (0.45,94.4)};
      \nextgroupplot[xlabel={$w_4$ (interval)}, xmin=-0.004, xmax=0.086, xtick={0,0.02,0.04,0.06,0.08}]
        \fill[cThr!14] (axis cs:0.04,92) rectangle (axis cs:0.06,100.6);
        \draw[black!55, dash pattern=on 2pt off 1.5pt, line width=0.5pt] (axis cs:0.045,92) -- (axis cs:0.045,100.6);
        \addplot[cPal, mark=square*] coordinates {(0.002,95.2) (0.01,97.5) (0.02,98.8) (0.04,99.6) (0.045,99.65) (0.05,99.65) (0.06,99.6) (0.08,98.5)};
        \addplot[cAlt, dashed, mark=triangle*, mark options={solid, fill=white}] coordinates {(0.002,96) (0.01,98) (0.02,98.9) (0.04,99.3) (0.045,99.64) (0.05,99.35) (0.06,99.3) (0.08,98.2)};
      \nextgroupplot[xlabel={$w_5$ (entropy)}, xmin=0.003, xmax=0.085, xtick={0.02,0.04,0.06,0.08}]
        \fill[cThr!14] (axis cs:0.01,92) rectangle (axis cs:0.08,100.6);
        \draw[black!55, dash pattern=on 2pt off 1.5pt, line width=0.5pt] (axis cs:0.041,92) -- (axis cs:0.041,100.6);
        \addplot[cPal, mark=square*] coordinates {(0.01,99.1) (0.02,99.4) (0.041,99.65) (0.06,99.5) (0.08,99.3)};
        \addplot[cAlt, dashed, mark=triangle*, mark options={solid, fill=white}] coordinates {(0.01,99) (0.02,99.2) (0.041,99.64) (0.06,99.3) (0.08,99.1)};
    \end{groupplot}
  \end{tikzpicture}

  \vspace{1pt}
  \pgfplotslegendfromname{senslegend}
  \vspace{-0.2cm}
  \caption{Single-weight sensitivity on CIC-IDS2017. Shaded areas mark stable regions, and dashed lines mark default values.}
  \label{fig:sens-single}
\end{figure*}

\begin{figure*}[t]
  \centering
  \begin{tikzpicture}
    \begin{axis}[
      paperstyle,
      width=0.36\textwidth, height=3.3cm,
      title={(a) Interaction of $w_2$ and $w_3$ (F1, \%)},
      xlabel={$w_3$ (density)}, ylabel={$w_2$ (distribution)},
      xmin=-0.5, xmax=4.5, ymin=-0.5, ymax=4.5, enlargelimits=false,
      xtick={0,...,4}, xticklabels={0.15,0.25,0.378,0.42,0.45},
      ytick={0,...,4}, yticklabels={0.12,0.18,0.243,0.30,0.35},
      ymajorgrids=false, axis lines*=box, tick style={draw=none},
      colormap={sens}{color=(white) color=(cPal!30) color=(cPal!75)},
      point meta min=92, point meta max=99.7,
      colorbar, colorbar style={width=2.5mm, font=\scriptsize, ytick={92,94,96,98}},
      nodes near coords={\pgfmathprintnumber[fixed,precision=2]{\pgfplotspointmeta}},
      every node near coord/.append style={font=\scriptsize, anchor=center, text=black!85},
    ]
      \addplot[matrix plot*, mesh/cols=5, point meta=explicit] table[meta=z] {
x y z
0 0 93.5
1 0 95.2
2 0 96.8
3 0 95
4 0 92.1
0 1 95.2
1 1 97.8
2 1 98.9
3 1 98.2
4 1 95.5
0 2 97
1 2 98.6
2 2 99.65
3 2 99.2
4 2 97.1
0 3 96.1
1 3 97.5
2 3 99.1
3 3 98.5
4 3 96
0 4 94
1 4 95.8
2 4 97.5
3 4 96.8
4 4 93.5
      };
    \end{axis}
  \end{tikzpicture}\hfill
  \begin{tikzpicture}
    \begin{groupplot}[
      paperstyle, linelegend,
      group style={group size=2 by 1, horizontal sep=1.1cm},
      width=0.3\textwidth, height=3.3cm, xmin=0.5, xmax=15.5, xtick={1,5,10,15},
      xlabel={Epoch},
      every axis plot/.append style={line width=1pt, mark size=1.6pt},
    ]
      \nextgroupplot[ylabel={F1 (\%)}, ymin=64, ymax=101, title={(b) Convergence (F1)}, legend to name=convlegend]
        \addplot[cAlt, densely dotted, mark=*, mark options={solid, fill=white}] coordinates {(1,66.89) (2,72.53) (3,77.15) (4,80.93) (5,84.02) (6,86.55) (7,88.63) (8,90.33) (9,91.72) (10,92.86) (11,93.79) (12,94.55) (13,95.18) (14,95.69) (15,96.11)};
        \addlegendentry{$w_1=0.05$}
        \addplot[cPal, mark=square*] coordinates {(1,78.41) (2,84.43) (3,88.75) (4,91.84) (5,94.06) (6,95.65) (7,96.78) (8,97.6) (9,98.18) (10,98.6) (11,99.0) (12,99.25) (13,99.43) (14,99.56) (15,99.65)};
        \addlegendentry{$w_1=0.20$ (default)}
        \addplot[cThr, dashed, mark=triangle*, mark options={solid, fill=white}] coordinates {(1,74.04) (2,80.59) (3,85.34) (4,88.77) (5,91.26) (6,93.07) (7,94.37) (8,95.32) (9,96) (10,96.5) (11,96.86) (12,97.12) (13,97.3) (14,97.44) (15,97.54)};
        \addlegendentry{$w_1=0.30$}
      \nextgroupplot[ylabel={Training loss}, ymin=0, ymax=0.72, title={(c) Convergence (loss)}]
        \addplot[cAlt, densely dotted, mark=*, mark options={solid, fill=white}] coordinates {(1,0.6959) (2,0.6032) (3,0.4881) (4,0.4461) (5,0.3893) (6,0.3035) (7,0.2707) (8,0.2179) (9,0.2013) (10,0.1768) (11,0.1583) (12,0.1557) (13,0.1069) (14,0.0996) (15,0.0674)};
        \addplot[cPal, mark=square*] coordinates {(1,0.6202) (2,0.4549) (3,0.3568) (4,0.2555) (5,0.196) (6,0.1502) (7,0.1141) (8,0.0872) (9,0.0768) (10,0.0578) (11,0.0379) (12,0.0453) (13,0.0215) (14,0.02) (15,0.0292)};
        \addplot[cThr, dashed, mark=triangle*, mark options={solid, fill=white}] coordinates {(1,0.6804) (2,0.4864) (3,0.3705) (4,0.287) (5,0.1941) (6,0.1538) (7,0.1193) (8,0.0849) (9,0.06) (10,0.0409) (11,0.0309) (12,0.0454) (13,0.036) (14,0.03) (15,0.03)};
    \end{groupplot}
  \end{tikzpicture}

  \vspace{1pt}
  \hspace*{0.36\textwidth}\pgfplotslegendfromname{convlegend}
  \vspace{-0.2cm}
  \caption{Weight interaction and convergence on CIC-IDS2017. (a) F1 over $w_2$ and $w_3$. (b, c) F1 and training loss per epoch for three values of $w_1$.}
  \label{fig:sens-inter}
\end{figure*}

\begin{figure*}[t]
  \centering
  \begin{tikzpicture}
    \begin{groupplot}[
      paperstyle,
      group style={group size=2 by 1, horizontal sep=1.4cm},
      width=0.44\textwidth, height=2.8cm,
    ]
      \nextgroupplot[title={(a) Distribution of F1 over 256 configurations}, xlabel={F1 (\%)}, ylabel={\#Configurations}, ymin=0, ymax=55, xmin=86, xmax=99.8, xtick={86,88,90,92,94,96,98}]
        \addplot[hist={bins=13, data min=86, data max=99}, fill=cPal!75, draw=white, line width=0.6pt] table[y=F1] {fig/sobol256.dat};
        \draw[black!55, dash pattern=on 2pt off 1.5pt, line width=0.6pt] (axis cs:99.65,0) -- (axis cs:99.65,55);
        \node[font=\scriptsize, text=black!70, anchor=north east, align=right] at (axis cs:99.5,53) {Default\\99.65};
      \nextgroupplot[title={(b) F1 versus $w_3$, colored by $w_4$}, xlabel={$w_3$ (density)}, ylabel={F1 (\%)}, ymin=85.5, ymax=100.3, xmin=0.09, xmax=0.51,
          colormap={sens4}{color=(cPal!30) color=(cPal) color=(cPal!55!black)}, point meta min=0, point meta max=0.1,
          colorbar, colorbar style={width=2.5mm, font=\scriptsize, ytick={0,0.05,0.1}, scaled y ticks=false, yticklabel style={/pgf/number format/fixed}, title={$w_4$}, title style={font=\scriptsize}}]
        \addplot[scatter, only marks, mark=*, mark size=1.3pt, scatter src=explicit, mark options={draw=white, line width=0.2pt}] table[x=w3, y=F1, meta=w4] {fig/sobol256.dat};
        \addplot[only marks, mark=star, mark size=3.2pt, mark options={draw=cAlt, line width=1pt}] coordinates {(0.378,99.65)};
    \end{groupplot}
  \end{tikzpicture}
  \vspace{-0.2cm}
  \caption{Global sensitivity over 256 Sobol-sampled weight configurations on CIC-IDS2017. The dashed line and the star mark the default.}
  \label{fig:sens-sobol}
\end{figure*}

\section{Experimental Setup}
\label{sec:setup}

\subsection{Description of Baselines}
\label{sec:baselines}
Table~\ref{tab:baselines} summarizes the baselines.

\noindent\textbf{Detectors.} \emph{ET-BERT}~\cite{lin2022bert} pre-trains a Transformer on unlabeled encrypted traffic for classification. \emph{TMG-GAN}~\cite{ding2024tmggan} balances rare attack classes with samples from multiple generators. \emph{Whisper}~\cite{fu2021realtime} clusters benign frequency-domain features and flags distant flows. \emph{HyperVision}~\cite{fu2023detecting} detects abnormal structures in a flow interaction graph. \emph{Kitsune}~\cite{mirsky2018kitsune} uses the reconstruction errors of an autoencoder ensemble. \emph{Trident}~\cite{zhao2024trident} runs per-class one-class learners and clusters rejected outliers into new classes. \emph{UnDiff}~\cite{lian2025facing} scores inter-sample uncertainty deviations estimated by evidential learning. \emph{Diff-RF}~\cite{marteau2021random} scores leaf distances and visiting frequencies in random partitioning trees. \emph{ContraMTD}~\cite{han2024contramtd} contrasts local behavior and global interaction features of benign traffic.

\noindent\textbf{Explainers.} All four explain the same autoencoder (AE) trained on benign flows and access only benign flows and the binary AE decisions (\S\ref{sec:interp}). \emph{LIME}~\cite{ribeiro2016why} and \emph{LEMNA}~\cite{guo2018lemna} fit a sparse linear model and a fused-lasso mixture regression around a flow, aggregated by submodular pick. \emph{Trustee}~\cite{jacobs2022ai} distills the AE into a high-fidelity decision tree. \emph{SCD-Tree}~\cite{zhang2024dissect} segments the data by the AE outputs and fits a Gaussian boundary per segment as interval rules.

\begin{table}[t]
  \centering
  \caption{Baselines and IDSForge in our setup. AE denotes the explained autoencoder, and the shaded row is IDSForge.}
  \label{tab:baselines}
  \tabstyle
  \setlength{\tabcolsep}{2pt}
  \begin{tabular*}{\linewidth}{@{\hspace{\tabcolsep}\extracolsep{\fill}}lllll}
    \tabtop
    \rowcolor{tabHead} \textbf{Method} & Type & Input & Training data & Output \\
    \tabmid
    ET-BERT     & Supervised & Packet bytes        & Benign + attack & Class label \\
    TMG-GAN     & Supervised & Flow features       & Benign + attack & Class label \\
    \tabmid
    Whisper     & Anomaly    & Frequency features  & Benign & Cluster distance \\
    HyperVision & Anomaly    & Interaction graph   & Benign & Structural score \\
    Kitsune     & Anomaly    & Packet statistics   & Benign & Recon. error \\
    Trident     & Anomaly    & Flow features       & Benign & One-class score \\
    UnDiff      & Anomaly    & Flow features       & Benign & Uncertainty \\
    Diff-RF     & Anomaly    & Flow features       & Benign & Path-length score \\
    ContraMTD   & Anomaly    & Interaction views   & Benign & Contrastive score \\
    \tabmid
    LIME        & Local      & AE decisions        & Benign & Feature weights \\
    LEMNA       & Local      & AE decisions        & Benign & Feature weights \\
    Trustee     & Global     & AE decisions        & Benign & Decision tree \\
    SCD-Tree    & Global     & AE decisions        & Benign & Interval rules \\
    \tabmid
    \rowcolor{tabOurs} \textbf{IDSForge} & Self-explaining & Flow features & Benign & Executable rules \\
    \tabbot
  \end{tabular*}
\end{table}

\subsection{Implementation and Deployment}
\label{sec:onlineSetting}
Each flow has $n=8$ packet-, flow-, and session-level features on timing, size, and topology, extracted from the packet captures instead of the released features (Table~\ref{tab:datasets}). EDAE has an isomorphic latent space and a 4-layer encoder and decoder with 256 hidden units, trained with Adam (learning rate 0.001, batch size 1024) for 15 epochs and $\tau=0.05$. The R*-tree has capacity $M=32$, and the co-evolver uses $\theta_{\mathrm{FP}}=0.01$ with 64 samples per hyper-rectangle.
We deploy IDSForge with Suricata 8.0.1 on a server with a 16-core Intel Xeon Gold 5218 CPU, a 10 GbE interface, and 128 GB DDR4 RAM. For CIC-IDS2017, CSE-CIC-IDS2018, and CIC-UNSW-NB15, the test-set packet captures are replayed with tcpreplay at up to 10 Gbps and processed in DPDK zero-copy mode, with 15,850 rules for CIC-IDS2017.

\section{Additional Experiments}
\label{sec:moreexp}

\subsection{Comparison of Learning Paradigms}
\label{sec:appendix1}
To choose the geometry of the benign region, we compare six paradigms on the 8 features of CIC-IDS2017 with a random 9:1 split (Table~\ref{table:motivationConstract}). Tree models achieve the highest F1 with fast inference, distance-based models with spherical regions rank second, and linear and neural models perform worse. We use CART (Gini), L2 logistic regression (100 iterations), KNN ($k=5$), LinearSVC ($C=1$, as RBF exceeded 2 hours), Gaussian naive Bayes, and a Transformer (8 heads, 512 hidden units, 30 epochs, other settings as EDAE).

\takeaway{Since each flow feature has a direct meaning and its benign values fall in limited ranges, axis-aligned boundaries fit benign traffic well, so interpretability need not cost accuracy.}

\begin{table}[t]
  \centering
  \caption{Learning paradigms on CIC-IDS2017 (metrics in \%, times in seconds). The shaded row is the best paradigm.}
  \label{table:motivationConstract}
  \tabstyle
  \setlength{\tabcolsep}{2.5pt}
  \begin{tabular}{lcccccc}
    \tabtop
    \rowcolor{tabHead} \textbf{Paradigm} & ACC & PRE & REC & F1 & Train(s) & Pred(s) \\
    \tabmid
    \rowcolor{tabOurs} Tree (CART)~\cite{lewis2000introduction} & 99.8 & 99.5 & 99.4 & 99.4 & 150.82 & 0.22 \\
    Additive (LR)~\cite{healy2006logistic} & 93.4 & 83.1 & 83.2 & 83.2 & 31.44 & 0.08 \\
    Distance (KNN)~\cite{samet2007k} & 99.3 & 97.4 & 98.8 & 98.1 & 0.30 & 652.23 \\
    Kernel (SVM)~\cite{chauhan2019problem} & 93.8 & 83.6 & 85.3 & 84.4 & 154.91 & 0.09 \\
    Probabilistic (NB)~\cite{heckerman1997bayesian} & 47.4 & 27.0 & 97.5 & 42.2 & 4.52 & 1.67 \\
    Neural (Transformer)~\cite{vaswani2017attention} & 98.5 & 94.6 & 98.0 & 96.3 & 279.54 & 27.54 \\
    \tabbot
  \end{tabular}
  \vspace{-0.3cm}
\end{table}

\subsection{Sensitivity Analysis}
\label{sec:sensitivity}
\noindent\textbf{Single-weight sweeps.} On CIC-IDS2017, each weight (default 0.20, 0.243, 0.378, 0.045, and 0.041 for $w_1$ to $w_5$) has a stable region with F1 above 98\% (Figure~\ref{fig:sens-single}). (\rmnum{1}) $w_3$ matters most, as raising it from 0.15 to 0.378 adds 3.2 points of F1, whereas 0.45 over-contracts the rules (REC $\downarrow$5.2 points). (\rmnum{2}) $w_2\le0.12$ fragments the hyper-rectangles (F1 at most 95.5\%). (\rmnum{3}) $w_1=0.30$ weakens the geometric objectives (F1 $\downarrow$2.11 points). (\rmnum{4}) $w_4\le0.02$ lets the bounds oscillate (F1 as low as 95.2\%). (\rmnum{5}) $w_5$ changes F1 by less than 0.6 points in $[0.01,0.08]$, but removing it lowers F1 to 90.19\% (\S\ref{sec:ablation}).

\noindent\textbf{Interaction and convergence.} In Figure~\ref{fig:sens-inter}(a), F1 peaks at the default (99.65\%), and a small $w_2$ with a large $w_3$ over-contracts the rules (92.1\%) while the reverse leaves them loose (94.0\%). In Figure~\ref{fig:sens-inter}(b, c), the default $w_1$ reaches 99.65\% F1 at epoch 15, whereas 0.05 converges slowly (96.11\%) and 0.30 plateaus at 97.54\%.

\noindent\textbf{Global sensitivity.} Over 256 Sobol-sampled combinations (Figure~\ref{fig:sens-sobol}), F1 spans 86.12\%--98.87\% (median 94.16\%), below the default. The 30 combinations reaching 97\% use $w_4\le0.06$ and $w_3\le0.40$, whereas the 25 below 90\% mostly have a large $w_4$ or a small $w_2$ (medians 0.089 and 0.116).

\takeaway{The density and distribution terms pull the bounds in opposite directions, so their balance matters most.}

\end{document}